\documentclass[journal,comsoc]{IEEEtran}
\usepackage[T1]{fontenc}

\usepackage{booktabs}
\usepackage{amsmath,amssymb,amsthm}
\interdisplaylinepenalty=2500
\usepackage[cmintegrals]{newtxmath}
\usepackage{todonotes}
\usepackage{subfigure}
\usepackage{graphicx}
\usepackage[noend]{algpseudocode}
\usepackage{algorithm}
\usepackage{multirow}
\usepackage{lipsum}
\usepackage[bookmarks=false]{hyperref}

\newtheorem{theorem}{Theorem}
\graphicspath{{./}}

\algrenewcommand{\algorithmiccomment}[1]{/* \textit{#1} */}

\begin{document}

\title{JASPER: Joint Optimization of Scaling, Placement, and Routing of Virtual Network Services}

%

\author{Sevil~Dr\"axler,
        Holger~Karl,
        and~Zolt\'an~\'Ad\'am~Mann
\thanks{S. Dr\"axler and H. Karl are with Paderborn University, Paderborn, Germany.}
\thanks{Z. A. Mann is with University of Duisburg-Essen, Essen, Germany.}
\thanks{This work has been submitted to the IEEE for possible publication.  Copyright may be transferred without notice, after which this version may no longer be accessible.}
}

\markboth{Manuscript Under Review}%
{Dr\"axler \MakeLowercase{\textit{et al.}}: JASPER: Joint Optimization of Scaling, Placement, and Routing of Virtual Network Services}
%


\maketitle

\begin{abstract}



To adapt to continuously changing workloads in networks, components of the running network services
may need to be replicated (\emph{scaling} the network service) and allocated
to physical resources (\emph{placement}) dynamically, also
necessitating dynamic re-routing of flows between service
components. In this paper, we propose JASPER, a fully automated
approach to jointly optimizing scaling, placement, and routing for
complex network services, consisting of multiple (virtualized) components. JASPER handles multiple network services
that share the same substrate network; services can be dynamically
added or removed and dynamic workload changes are
handled.
Our approach lets service designers specify their services on a high
level of abstraction using \emph{service templates}. From the
service templates and a description
of the substrate network, JASPER automatically
makes scaling, placement and routing decisions, enabling quick
reaction to changes. We formalize the problem, analyze its
complexity, and develop two algorithms to solve it. Extensive
empirical results show the applicability and effectiveness of the
proposed approach.


\end{abstract}



\section{Introduction}

Network services, like video streaming and online gaming, consist of
different service components, including (virtual) network functions,
application servers, data bases, etc. Typically, several of these
network services are hosted on top of wide-area
networks, 
serving the continuously
changing demands of their users. The need for efficient and automatic
deployment, scaling, and path selection methods for the network
services has led to paradigms like network softwarization, including software-defined networking (SDN) and network
function virtualization (NFV).


SDN and NFV provide the required control and
orchestration
mechanisms to drive the
network services through their life-cycle. Today, network services are placed
and deployed in the network based on fixed, pre-defined
descriptors~\cite{etsi-mano} that contain the number of required
instances for each service component and the exact resource
demands. More flexibility can be achieved by specifying auto-scaling
thresholds for metrics of interest. Once such a threshold is reached,
the affected network services should be  modified, e.g., scaled.

To react to addition and removal of network services, fluctuations in the request load of a network service, or to serve new user groups in a new location, (i) the network services can be scaled out/in by adding/removing instances of service components, (ii) the placement of service components and the amount of resources allocated to them can be modified, and (iii) the network flows between the service components can be re-routed through different, more suitable paths.

Given this large number of degrees of freedom for finding the best adaptation, deciding scaling, placement, and routing independently can result in sub-optimal decisions for the network and the running services. 
Consider a service platform provider hosting a dynamically changing
set of network services, where each network service serves
dynamically changing user groups that produce dynamically changing
data rates.
Trade-offs among the conflicting goals of network services and
platform operators
can be highly non-trivial, for example:
\begin{itemize}
\item Placing a compute-intensive service component on a node with limited resources near the
  \emph{source} of requests (e.g., the location of users, content
  servers, etc.) minimizes latency but placing it on a more powerful
  node further away in the network minimizes processing time.
\item Letting a single instance of a data-processing component serve
  multiple sources minimizes compute resource consumption but  using
  dedicated instances near the sources  minimizes network load.
\item Changing the current configuration to a better one will
  hopefully pay off in the long run but keeping the current
  configuration avoids reconfiguration costs.
\item Fulfilling the resource requirements of one service versus the requirements of another service.
\end{itemize}

To deal with these challenges, we propose JASPER, a comprehensive approach for the \textbf{J}oint optimiz\textbf{A}tion of \textbf{S}caling, \textbf{P}lac\textbf{E}ment, and \textbf{R}outing of virtual network services.
In JASPER, each network service is described by a \emph{service template}, containing information about the components of the network service, the interconnections between the components, and the resource requirements of the components. Both the resource requirements and the outgoing data rates of a component are specified as \emph{functions of the incoming data rates}. 

The input to the problem we are tackling comprises service templates, location and data rate of the \emph{sources} of each network service, and the topology and available resources of the underlying network. 
Our optimization approach takes care of the rest: based on the
location and current data rate of the sources,
in a single step, the templates are scaled by replicating service components as necessary, the placement of components on network nodes is determined, and data flows are routed along network paths. Node and link capacity constraints of the network are automatically taken into account. We optimize the solution along multiple objectives, including minimizing resource usage, minimizing latency, and minimizing deployment adaptation costs.

Our main contributions are  as follows:
\begin{itemize}
    \item For the case where resource demands of service components are
    determined as a function of the incoming data rate to each
    instance, we formalize \emph{template embedding} as a joint optimization problem for scaling, placing, and routing service templates in the network.
    \item We prove the NP-hardness of the problem.
    \item We present two algorithms for solving the problem, one based on mixed integer programming, the other a custom heuristic.
	\item We evaluate both algorithms in detail to determine their strengths and weaknesses.
\end{itemize}

With the proposed approach, service providers obtain a flexible way to define network services on a high level of abstraction while service platform providers obtain powerful methods to optimize the scaling and placement of multiple services in a single step, fully automatically.

The rest of the paper is organized as follows. In Section~\ref{sec:previous}, we
give an overview of related work. Section~\ref{sec:approach} presents a high-level overview of our approach and Section~\ref{sec:problem} describes the details of our model and assumptions. We discuss the complexity of template embedding
in Section~\ref{sec:compl} and formulate the problem as a mixed integer
programming model in Section~\ref{sec:integer}. We present a heuristic solution 
in Section~\ref{sec:heur} and the evaluation results of our solutions in 
Section~\ref{sec:eval}, before concluding the paper in Section~\ref{sec:concl}.



\section{Related work}
\label{sec:previous}

The template embedding problem is a joint, single-step optimization of scaling, placement, and routing for network services. In general, our solution can be applied in different contexts, e.g., (distributed) cloud computing and Network Function Virtualization (NFV). In this section, after an analysis of related approaches from a theoretical point of view, we give an overview of related work in the cloud computing and NFV contexts. The major difference among the existing work in these two fields is usually the abstraction level considered for the substrate network and the resulting assumptions for the model. In particular, in the cloud computing context, embedding is typically done on top of physical machines in data centers, while in the NFV context,  embedding is done on top of geographically distributed points of presence.  

\subsection{Virtual network embedding problem}

The combination of the placement and path selection sub-problems of
template embedding is similar to the Virtual Network Embedding (VNE)
problem. Both deal with mapping virtual nodes and virtual links of a graph into another graph and do not include the scaling step. Fischer et al.~\cite{Fischer2013} have published a survey of different approaches to VNE, including static and dynamic VNE algorithms. 

In contrast to static VNE solutions that consider the initial mapping process only, in this paper we also deal with optimizing and modifying already embedded templates. Some VNE solutions, for example, Houidi et al.~\cite{houidi2015exact}, can modify the mapping in reaction to node or link failures. The modifications in their work, however, are limited to recalculating the location for the embedded virtual network, i.e., migrating some of the nodes and changing the corresponding paths among them. In addition to these modifications, our approach can also modify the \emph{structure} of the graph to be embedded by adding or removing nodes and links if necessary.

\subsection{Cloud computing context}

The related problem in cloud environments is typically formulated as resource allocation for individual components. Scaling and placing instances of virtual machines on top of physical machines while adhering to capacity constraints are the usual problems tackled in this context~\cite{lorido2014review,mann2016interplay}. The communication among different virtual machines, however, is usually left out or considered only in a limited sense~\cite{mann2015allocation}. Even the approaches that do consider the communication among virtual machines~\cite{divakaran2015towards,ahvar2015nacer,alicherry2013optimizing,ahvar2016cacev} do not include routing decisions whereas JASPER also includes routing.

Relevant to the placement sub-problem of template embedding, Bellavista et al.~\cite{Bellavista2015} focus on the technical issues of deploying flexible cloud infrastructure, including network-aware placement of multiple virtual machines in virtual data centers. Wang et al.~\cite{Wang2017} study the dynamic scaling and placement problem for network services in cloud data centers, aiming at reducing costs. These papers also do not address routing. Moreover, our approach of specifying resource consumption as a function of input data rates allows a much more realistic modeling of the resource needs of service components than the constant resource needs assumed by the existing approaches in this context.

Keller et al.~\cite{Keller2014b} consider an approach similar to our template embedding problem in the context of distributed cloud computing. Our terminology is partly based on their work but there are important differences in the assumptions and the models that make our approach stronger and more flexible than their solutions. In contrast to their model, where the number of users determines the number of required instances, the deciding factor in our work is the \emph{data rate} originating from different source components. Data rate can be represented, for example, as requests or bits per second and is a more perceptible metric in practical applications and gives a more fine-grained control over the embedding process. Moreover, we do not enforce strict scaling restrictions for components as  done in their work. (For example, their method needs as input the exact number of instances of a back-end server that is required behind a front-end server.) Finally, the optimization objective in their model is limited to minimizing the total number of instances for embedded templates. We use a more sophisticated multi-objective optimization approach where different metrics like CPU and memory load of network nodes, data rate on network links, and latency of embedded templates are considered.

\subsection{Network function virtualization context}

The placement and routing problems are also relevant in the field of Network Function Virtualization (NFV). In the NFV context, the \emph{forwarding graphs} of network services composed of multiple virtual network functions (VNFs) are mapped into the network. Herrera et al.~\cite{herrera2016resource} have published an analysis of existing solutions for placing network services as part of a survey on resource allocation in NFV. 

Kuo et al.~\cite{Kuo2016} consider the joint placement and routing problem, focusing on maximizing the number of admitted network service embedding requests. Ahvar et al.~\cite{Ahvar2017} propose a solution to this problem, with the assumption that the VNFs can be re-used among different flows. Their objective is to find the optimal number of VNFs for all requests and to minimize the costs for the provider. Another similar approach that considers re-using components is proposed by Bari et al.~\cite{bari2016orchestrating}. 
Kebbache et al.~\cite{Khebbache2017} aim at solving this problem in an
efficient way that can scale with the size of the underlying
infrastructure and the embedded network services. They measure the
efficiency of their algorithms with respect to run time, acceptance
rate, and costs. Another attempt to solve this problem in an efficient
and scalable way has been made by Luizelli et al.~\cite{Luizelli2017},
focusing on minimizing resource allocation. In comparison to all these approaches, we consider a more comprehensive optimization objective, trying to minimize the delay for  network services, the number of added or removed instances, resource consumption, as well as overload of resources.

In our work, the exact structure of the network service does not
have to be fixed in the deployment request.
In a previous work~\cite{draexler2017ijnm}, we have studied another type
of flexibility in the network service structure, namely, the case
where the network service components are specified with a partial
order and can be re-ordered if desirable for the optimization
objectives. Beck et al.~\cite{Beck2015} also consider placement of network services with flexibly ordered components. JASPER is based on the assumption that the \emph{order} of traversing the service components is fixed and given, however, the number of instances for each component and the amount of resources allocated to each component can be adapted dynamically, resulting in network services with malleable structures.

Several other optimization approaches~\cite{mehraghdam-netsoft16,sahhaf2015,moens2014vnf,savi2015impact} and heuristic 
algorithms~\cite{mijumbidesign,beck2015coordinated} have been proposed for placement, scaling, and path selection problems for network services. Our template embedding approach has two important differences compared to these
solutions. First, our approach can be used for initial placement of a newly requested service as well as 
scaling and adapting existing embeddings. Second, in our approach, the structure of the service and mapping of the service components to network nodes and the optimal routing are determined in one 
single step, based on the requirements of the service and current state
of network resources, searching for a global optimum.

A preliminary version of this work was presented at the CCGrid 2017
conference \cite{draexler2017joint}. Compared to the conference
version, this paper contains the proof of NP-hardness, more detailed
explanation of the problem model and the devised algorithms, and a
more detailed evaluation and discussion of the practical applicability
of the proposed approach.



\section{Approach Overview}
\label{sec:approach}

In typical management and orchestration frameworks~\cite{etsi-mano},
service providers need to submit exact descriptors of their network
service structure, resource demands, and expected traffic from sources to a service management and orchestration system
(Fig.~\ref{fig:orch-normal}).
Based on the descriptors, placement,
scaling, and routing decisions are made for each network service, independently from one another.


\begin{figure}[t]
\centering
\includegraphics[width=0.9\columnwidth]{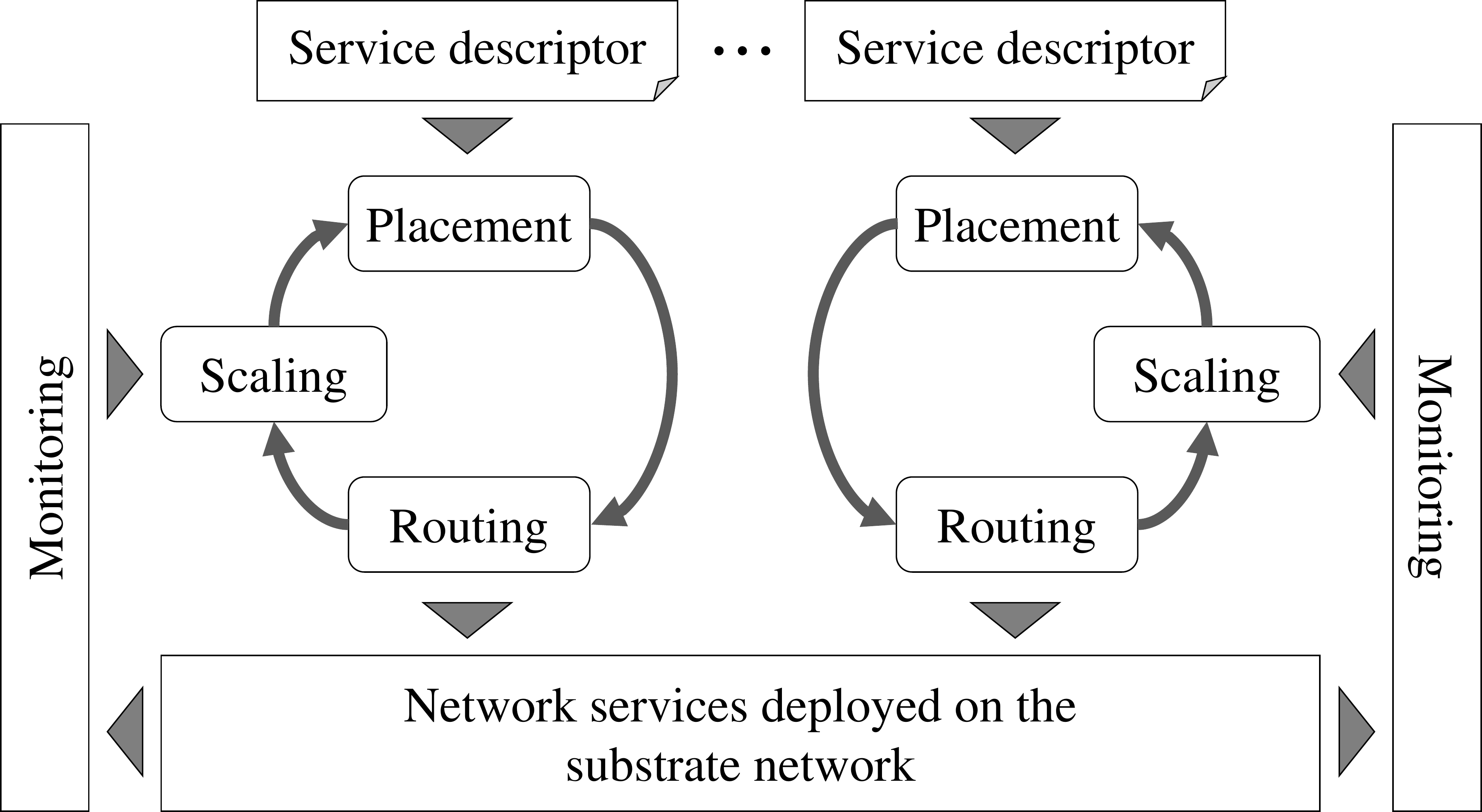}
\caption{Conventional network service life-cycle, from descriptors to
  running services}
\label{fig:orch-normal}
\end{figure}


\begin{figure}[t]
\centering
\includegraphics[width=0.8\columnwidth]{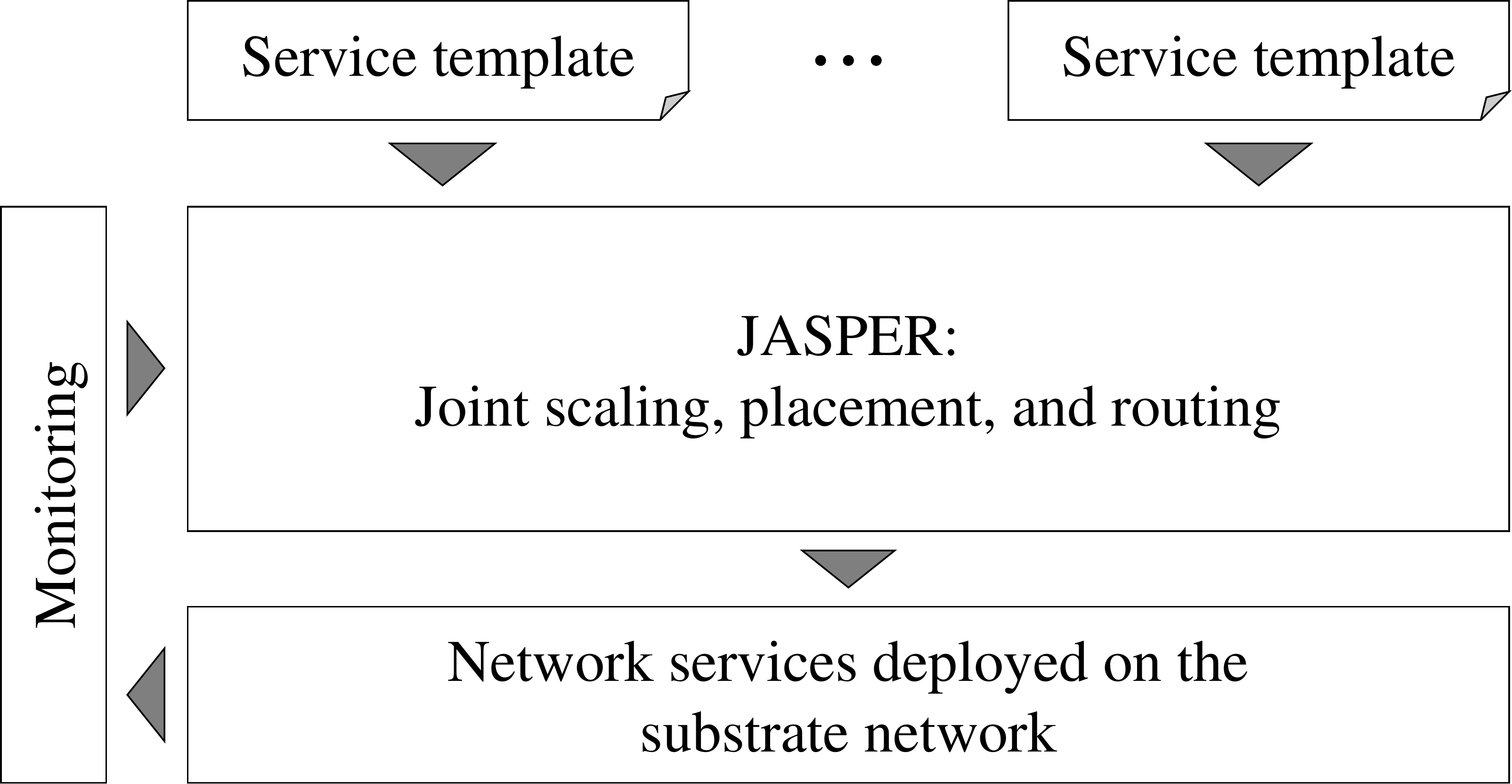}
\caption{Network service management and orchestration using JASPER}
\label{fig:orch-jasper}
\end{figure}

JASPER makes two major changes to this approach, one with respect to
the description of the network services
(Section~\ref{sec:templ-inst-over}) and another with respect to
handling the scaling, placement, and routing decision processes (Section~\ref{sec:joint-single-step}).

\subsection{Templates instead of over-specified descriptors}
\label{sec:templ-inst-over}

Because of the limited precision and flexibility of typical
descriptors, we base our approach on so-called \emph{service
  templates}. Using service templates, service providers are required
to specify neither the \emph{exact} resource demands (e.g., memory or
CPU) of service components nor the required number of instances of each component. 

The service template describes the components of the network service and their required interconnections on an abstract level, without deployment details. Moreover, it gives the resource demands of the network service as a function of the load:
\begin{itemize}
    \item The required computational capacity (e.g., CPU and memory) is described for each service component as a function of the input data rate. This can be used to calculate the network node capacity required to host the service component.
    \item The amount of traffic leaving each service component towards other components is specified as a function of the data rate that enters the component. This can be used to calculate the link capacity required to host the traffic flowing between any two interconnected instance.
\end{itemize}

In addition to the service templates, service providers may include the expected traffic originating from the sources of the network service in the request to embed a service template. As the traffic is constantly changing, the current traffic needs to be monitored and fed back to the template embedding process, to keep the network service in an optimal state. 
In this way, depending on the location and data rate of the sources of the network service, resource requirements are calculated dynamically, based on the given functions, eliminating the risk of over- or under-estimating the resource demands. Based on the functions describing the dependency of resource requirements and outgoing data rates on incoming data rates, it is also possible to reason about possible changes to the deployment and their impact, which is a pre-requisite for effective optimization.
The specific functions highly depend on the type and implementation of the service component and can be derived, for example, based on historical usage data or by automatic service profiling methods~\cite{profiling2016}. 


\subsection{Joint, single-step scaling, placement, and routing}
\label{sec:joint-single-step}

As shown in Fig.~\ref{fig:orch-normal}, in typical management and
orchestration frameworks~\cite{etsi-mano}, based on the description of
the network service and the state of the network's resources, the
requested number of instances for each service component are computed
and then placed and deployed with the requested amount of resources,
in an appropriate location. After path selection and instantiation of
the network service, the running instances are monitored and re-scaled
and re-placed based on pre-defined scaling rules if required. 
Deciding the number of required instances for each service component,
the amount of resources allocated to each component, and the optimal
paths selected for routing the network service flows are, however,
highly interdependent problems, which cannot be solved optimally using
such independent management and orchestration steps. 

Our approach, illustrated in Fig.~\ref{fig:orch-jasper}, changes the way network service life-cycle is handled, by combining scaling, placement, and routing steps into a joint decision process.  
Depending on the location and data rate of the sources,
\begin{itemize}
    \item each service template is scaled out into an overlay with the necessary number of instances required for each service component;
    \item each component instance is mapped to a network node and is allocated the required amount of resources on that node;
    \item the connections among component instances are mapped to flows along network links, carrying the data rate.
\end{itemize}
JASPER is an integrated approach in multiple dimensions: (i) scaling, placement, and routing decisions are made in a single optimization step; (ii) all services that are to be placed in the same substrate network are considered together; (iii) newly requested and already deployed services are optimized jointly. This way, a global optimum can be achieved.

Modern management and orchestration systems~\cite{osmwebsite,sonatawebsite,unifywebsite} have a flexible design to incorporate innovative life-cycle management approaches. For example, SONATA's service platform~\cite{sonata-paper} has a customizable service life-cycle management plugin. The platform operator can easily modify the order of life-cycle management operations and customize different operations. Using service-specific management programs it is also possible to specify when and how scaling, placement, and routing operations are performed for each network service, making the practical implementation of JASPER possible.



\section{Problem model}
\label{sec:problem}

In this section, we formalize our model and define the problem we are tackling.
Our model uses three different graphs for representing (i) the generic
network service structure, (ii) a concrete and deployable
instantiation of the network service, and (iii) the actual network. We use different names and notations to distinguish among these graphs (Table~\ref{tab:graphs}).

Informally, the problem we  address is as follows: given a substrate network, a set of -- newly requested or already existing -- network services with 
their templates, and the source(s) for the services in the network
along with the traffic originating from them, we want to optimally embed the network services into the network.

\begin{table}[t]
\centering
\caption{Notations Used for Graphs in the Model}
\label{tab:graphs}
\begin{tabular}{llll}
\toprule
Graph                                   & Symbol          & Name    & Annotations \\ 
\midrule
\multirow{2}{*}{Template $G_\text{tmpl}$}                & $j{\in} C_T$         & Component & $\text{In}(j)$, $\text{Out}(j)$, $p_j,m_j,r_j$  \\
                                        & $a{\in} A_T$         & Arc       \\ 
\midrule
\multirow{2}{*}{Overlay $G_\text{OL}$}                   & $i{\in} I_\text{OL}$ & Instance & $c(i)$, $P_T^{(I)}(i)$  \\
                                        & $e{\in} E_\text{OL}$ & Edge & $P_T^{(E)}(e)$     \\
\midrule
\multirow{2}{*}{Network $G_\text{sub}$} & $v{\in} V$           & Node  &  $\text{cap}_{\text{cpu}}(v)$, $\text{cap}_{\text{mem}}(v)$  \\
                                        & $l{\in} L$           & Link  & $b(l)$, $d(l)$    \\ 
\bottomrule
\end{tabular}
\end{table}

\subsection{Substrate network}

We model the \emph{substrate network} as a directed graph
$G_\text{sub}{=}(V,L)$. Each \emph{node} $v\in V$ is associated with a
CPU capacity $\text{cap}_{\text{cpu}}(v)$ and a memory capacity
$\text{cap}_{\text{mem}}(v)$ (this can be easily extended to other
types of resources). Moreover, we assume that every node has routing
capabilities and can forward traffic to its neighboring
nodes.\footnote{Capacities can be 0, e.g., to represent conventional
  switches by 0 CPU capacity or an end device by 0 forwarding capacity.} 
Each \emph{link} $l\in L$ is associated with a maximum data rate
$b(l)$ and a propagation delay $d(l)$. For each node $v$, we assume that the internal communications (e.g., communication inside a data center) can be done with unlimited data rate and negligible delay.

\subsection{Templates}

The substrate network has to host a set $\cal T$ of network services. We define the structure of each network service $T\in {\cal T}$ using a \emph{template}, which is a directed acyclic graph $G_\text{tmpl}(T){=}(C_T,A_T)$. We refer to the nodes and edges of the template graph as \emph{components} and \emph{arcs}, respectively. They define the type of components required in the network service and specify the way they should be connected to each other to deliver the desired functionality.
Fig.~\ref{fig:template} shows an example template. 

A template component $j\in C_T$ has an ordered set of inputs, denoted
as $\text{In}(j)$, and an ordered set of outputs, denoted as
$\text{Out}(j)$.  Its resource consumption depends on the data rates
of the flows entering the component. We characterize this using a pair
of functions
$p_j,m_j:\mathbb{R}_{\ge 0}^{|\text{In}(j)|} \to \mathbb{R}_{\ge 0}$,
where $p_j$ is the CPU load and $m_j$ is the required memory size of
component $j$, depending on the data rate of the incoming flows. These
functions typically account for resource consumption due to processing
the input data flows as well as fixed, baseline consumption (even when
idle). Similarly, data rates of the outputs of the component are
determined as a function of the data rates on the inputs, specified as
$r_j:\mathbb{R}_{\ge 0}^{|\text{In}(j)|} \to \mathbb{R}_{\ge
  0}^{|\text{Out}(j)|}$.  Fig.~\ref{fig:component} shows examples for
functions $p_j,m_j,r_j$ that define the resource demands and output
data rates of an example component.

Each arc in $A_T$ connects an output of a component to an input of another component.

\emph{Source components} are special components in the template: they have no inputs, a single output with unspecified data rate, and zero resource consumption. In the example of Fig.~\ref{fig:template}, S is a source component, whereas the others are normal processing components.

\subsection{Overlays and sources}

A template specifies the types of components and the connections among them as well as their resource demands depending on the load.
A specific, deployable instantiation of a network service can be
derived by scaling its template, i.e., creating the necessary number
of instances for each component and linking the instances with each
other according to the requirements of the template. Depending on data
rates of the service flows and the locations in the network where the
flows start, different numbers of instances for each component might
be required. To model this, for each network service $T$, we define a
set of \emph{sources} $S(T)$. The members of $S(T)$ are tuples of the
form $(v,j,\lambda)$, where $v\in V$ is a node of the substrate
network, $j\in C_T$ is a source component, and
$\lambda\in\mathbb{R}_+$ is the corresponding data rate assigned to the output of this source component. Such a
tuple means that an instance of source component $j$ generates a flow from node $v$ with rate $\lambda$. Sources may represent populations of users, sensors, or any other component that can generate flows to be processed by the corresponding network service. Fig.~\ref{fig:sources} shows two example sources for the template of Fig.~\ref{fig:template}, located on different nodes of the substrate network.

\begin{figure}[tb]
\centering
\subfigure[\label{fig:template}A template]{\includegraphics[width=0.27\columnwidth]{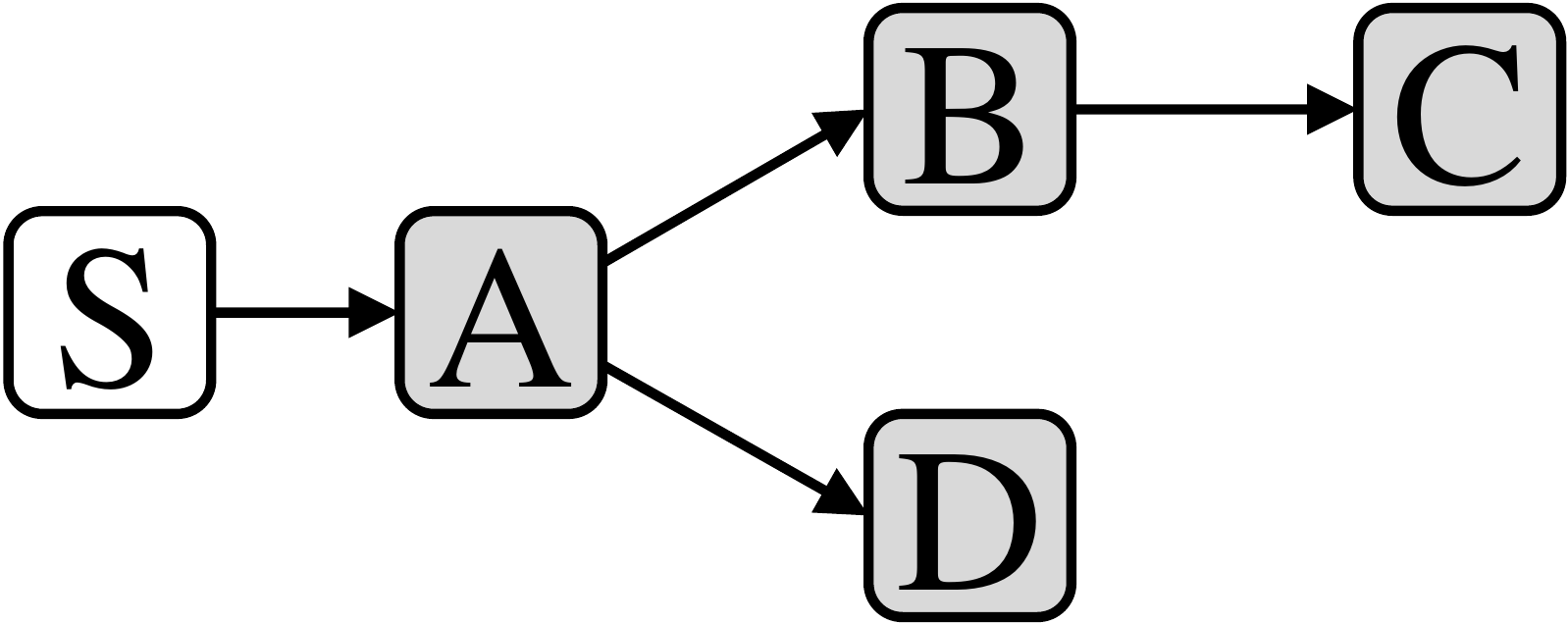}}
\hfil
\subfigure[\label{fig:component}A component]{\includegraphics[width=0.4\columnwidth]{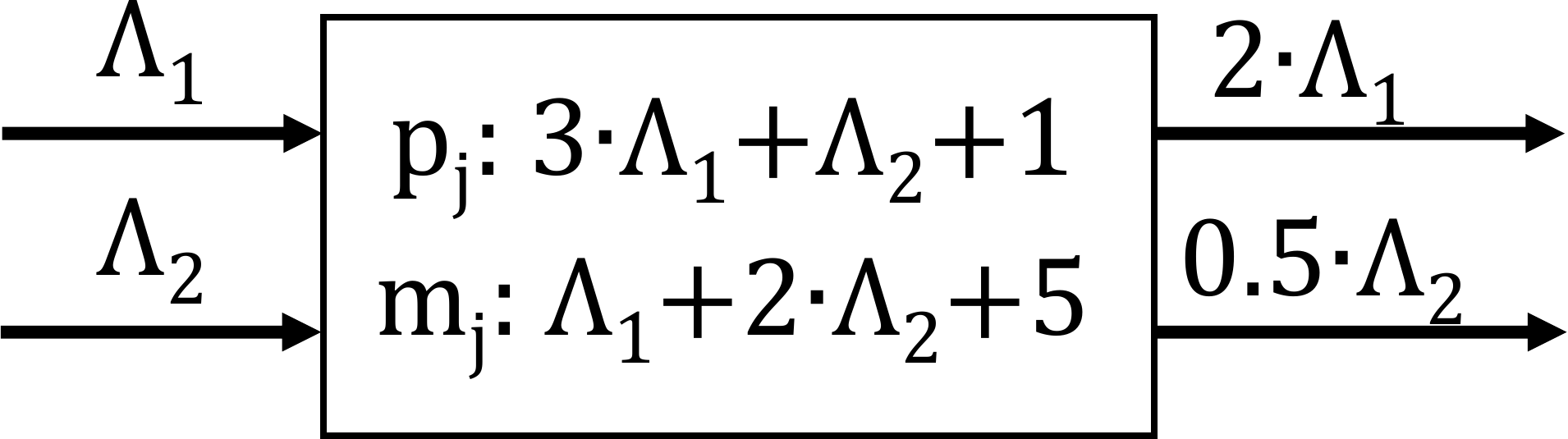}}

\subfigure[\label{fig:sources}Sources on a network]{\includegraphics[width=0.57\columnwidth]{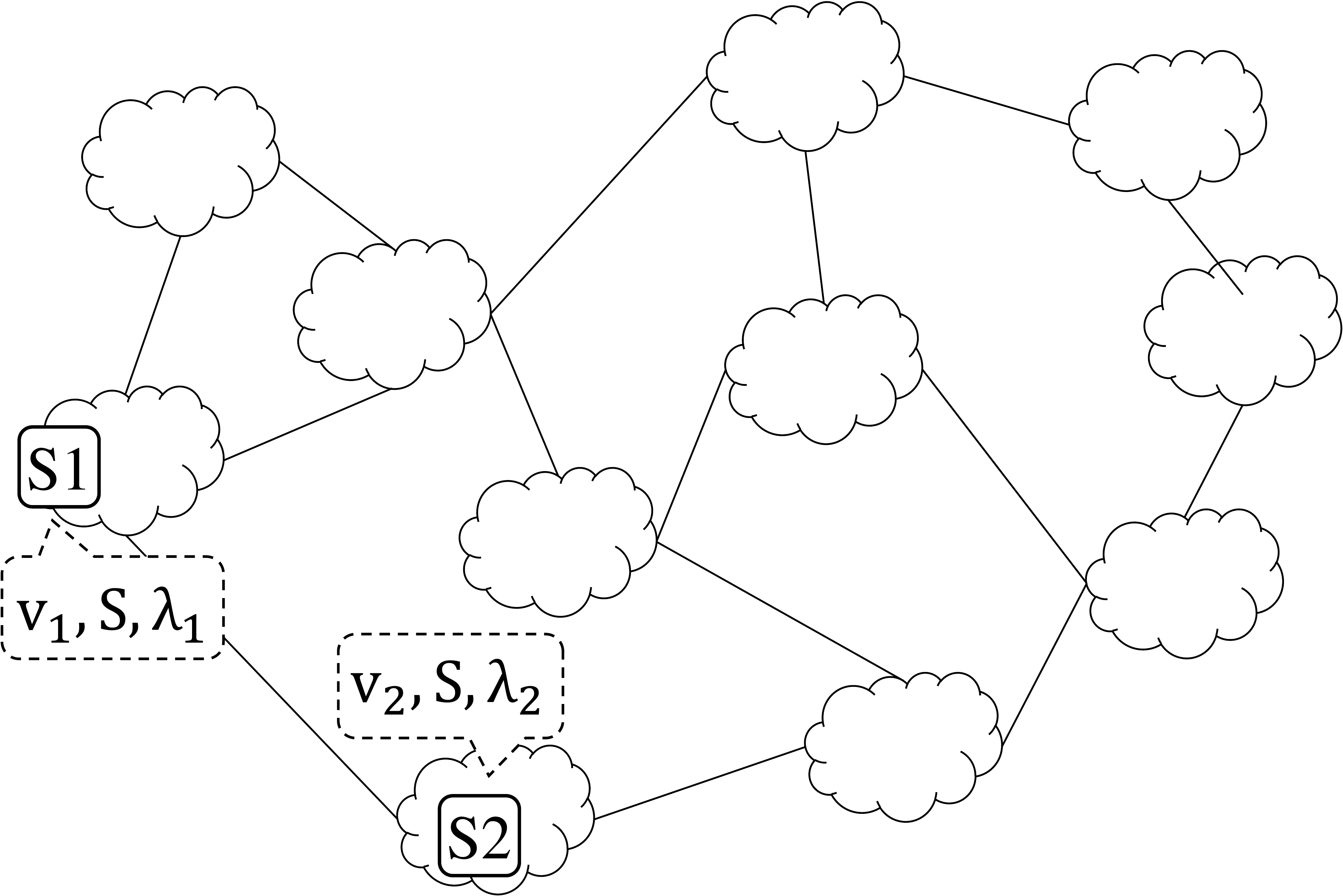}}

\subfigure[\label{fig:overlay}An overlay]{\includegraphics[width=0.3\columnwidth]{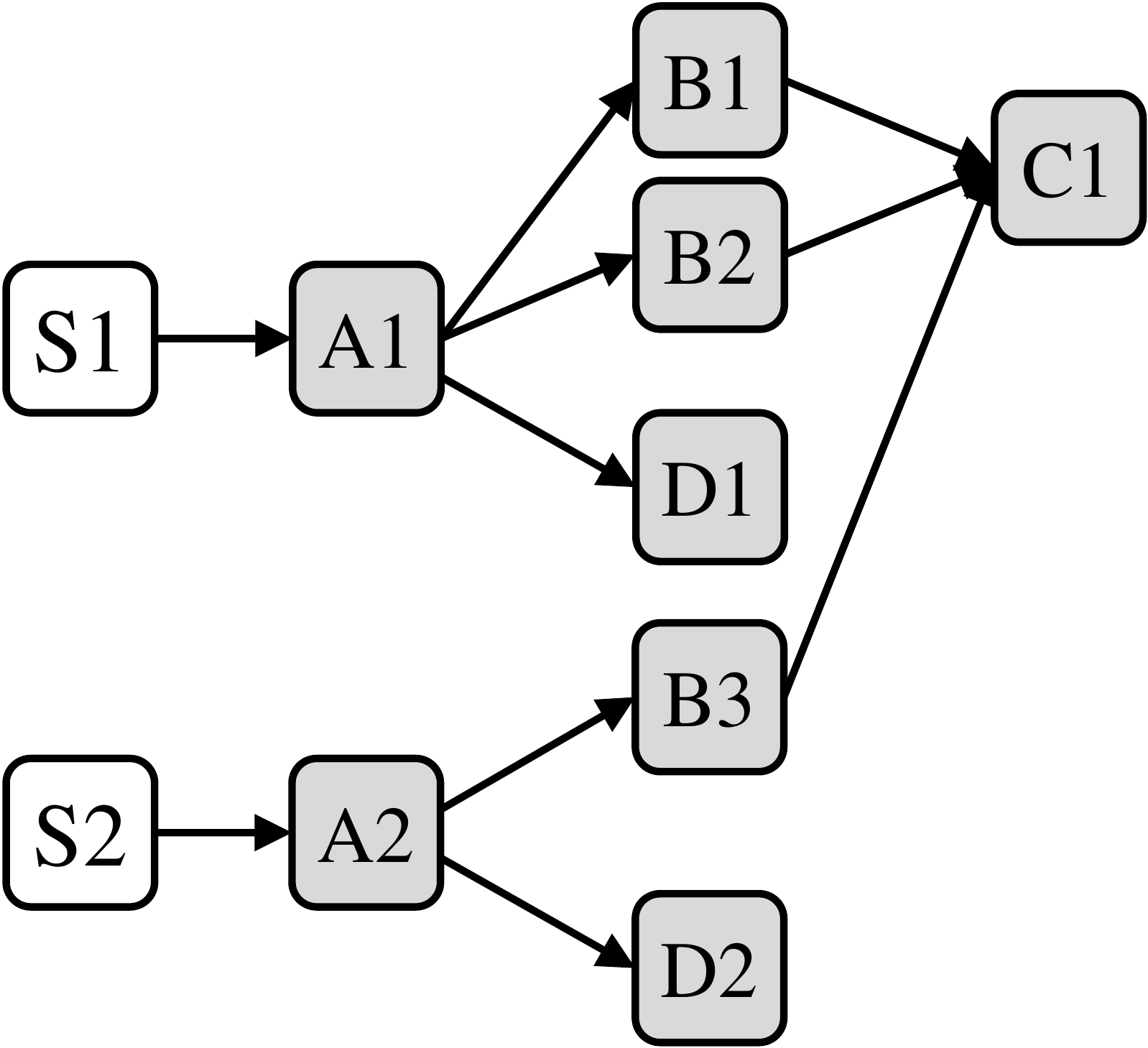}}
\hfil
\subfigure[\label{fig:mapping}Overlay embedded in the network]{\includegraphics[width=0.57\columnwidth]{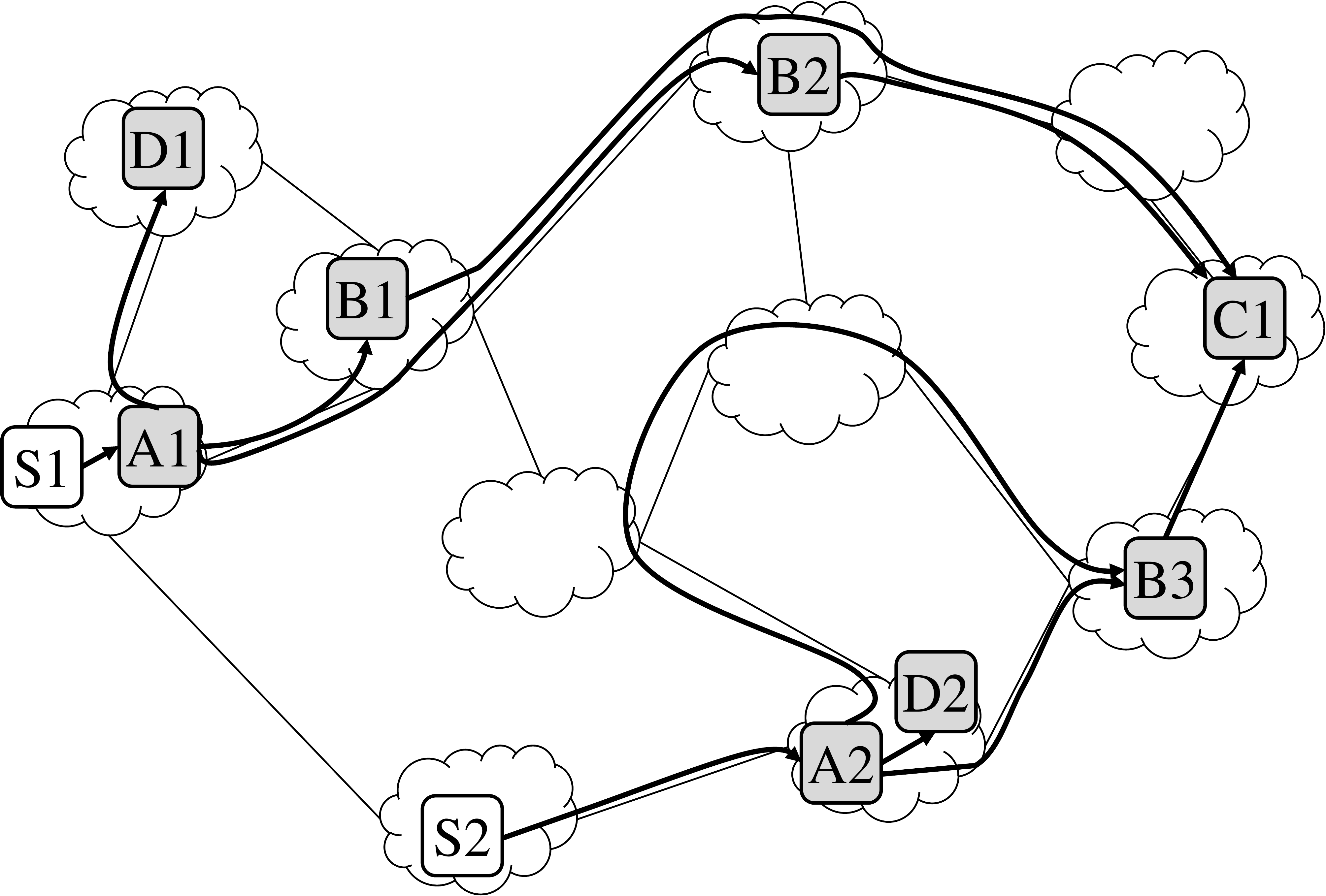}}
\caption{Some examples: (a) a template, (b) a component, (c) an overlay corresponding to the template, and (d) a mapping of the overlay into a substrate network. The links of the substrate network are bi-directional.}
\end{figure}

An \emph{overlay} is the outcome of scaling the template based on the associated sources. An overlay $\text{OL}$ stemming from template $T$ is described by a directed acyclic graph $G_\text{OL}(T){=}(I_\text{OL},E_\text{OL})$. Each component \emph{instance} $i\in I_\text{OL}$ corresponds to a component $c(i)\in C_T$ of the underlying template.
Each $i\in I_\text{OL}$ has the same characteristics (inputs, outputs, resource consumption characteristics) as $c(i)$. Moreover, if there is an edge from an output of an instance $i_1$ to an input of instance $i_2$ in the overlay, then there must be a corresponding arc from the corresponding output of $c(i_1)$ to the corresponding input of $c(i_2)$ in the template. This ensures that the edge structure of the overlay is in line with the structural requirements of the network service, represented by the arcs in the template.

To be able to create the required number of instances for each
component, we assume either that the components are stateless or that a state
management system is in place to handle state redistribution upon adding or removing instances. In this way, requests can be freely routed to any instance of a component. Alternatively, additional details can be added to the model, for example, to make sure that the flows belonging to a certain session are routed to the right instance of stateful components that have stored the corresponding state information. 

Fig.~\ref{fig:overlay} shows an example overlay corresponding to the template in Fig.~\ref{fig:template}. 
The naming of the instances follows the convention that the first letter identifies the corresponding component in the template, e.g., A1 is an instance of component A.
An overlay might include multiple instances of a specific template
component, e.g., B1, B2, and B3 all  are instances of component B.
An output of an instance can be connected to the input of multiple instances of the same component, like the output of A1 is connected to the inputs of B1 and B2. In a case like that, B1 and B2 share the data rate calculated for the connection between components A and B. Similarly, outputs of multiple instances in the overlay can be connected to the input of the same instance, like the input of C1 is connected to the output of B1, B2, and B3, in which case the input data rate for C1 is the sum of the output data rates of B1, B2, and B3.

\subsection{Mapping on the substrate network}

Each overlay $G_\text{OL}(T)$ must be mapped to the substrate network by a feasible mapping $P_T$. We define the mapping as a pair of functions $P_T=\left(P_T^{(I)},P_T^{(E)}\right)$.

$P_T^{(I)}:I_\text{OL}\to V$ maps each instance in the overlay to a
node in the substrate network.  We make the simplifying assumption
that two instances of the same component cannot be mapped to the same
node. The rationale behind this assumption is that in this case it
would be more efficient to replace the two instances by a single
instance and thus save the idle resource consumption of one
instance.\footnote{This simplification is mostly a technicality to
  simplify the problem write-up and could be extended if necessary.}

$P_T^{(E)}:E_\text{OL}\to {\cal F}$ maps each edge in the overlay to a flow in the substrate network; $\cal F$ is the set of possible flows in $G_\text{sub}$.
We assume the flows are splittable, i.e., can be routed over multiple paths between the corresponding endpoints in the substrate network.

The two functions must be compatible: if $e\in E_\text{OL}$ is an edge from an instance $i_1$ to an instance $i_2$, then $P_T^{(E)}(e)$ must be a flow with start node $P_T^{(I)}(i_1)$ and end node $P_T^{(I)}(i_2)$. Moreover, $P_T^{(I)}$ must map the instances of source components in accordance with the sources in $S(T)$, mapping an instance corresponding to source component $j$ to node $v$ if and only if $\exists (v,j,\lambda)\in S(T)$.

The binding of instances of source components to sources determines the outgoing data rate of these instances. As the overlay graphs are acyclic, the data rate $\lambda(e)$ on each further overlay edge $e$ can be determined based on the input data rates and the $r_j$ functions of the underlying components, considering the instances in a topological order. The data rates, in turn, determine the resource needs of the instances.


Fig.~\ref{fig:mapping} shows a possible mapping of the overlay of Fig.~\ref{fig:overlay} to an example substrate network, based on the pre-defined location of S1 and S2 in the network.
Note that it is possible to map two communicating instances to the same node, like A2 and D2 in the example. In this case, the edge between them can be realized inside the node, without using any links.
The flow between A2 and B3 is an example of a split flow that is routed over two different paths in the substrate network.

Note that Fig.~\ref{fig:mapping} shows only a single overlay mapped to
the substrate network for the sake of clarity. In general, JASPER can
embed several overlays corresponding to different network
services into a substrate network.




\subsection{Objectives}
\label{subsec:problem}

The \emph{system configuration} consists of the overlays and their mapping on
the substrate network.
A new system configuration can be computed by an appropriate algorithm for the template embedding problem.

A valid system configuration must respect all capacity constraints: for each node $v$, the total resource needs of the instances mapped to $v$ must be within its capacity concerning both CPU and memory, and for each link $l$, the sum of the flow values going through $l$ must be within its maximum data rate. However, it is also possible that some of those constraints are violated in a given system configuration: for example, a valid system configuration (i.e., one without any violations) may become invalid because the data rate of a source has increased, because of a temporary peak in resource needs, or a failure in the substrate network. Therefore, given a current system configuration $\sigma$, our primary objective is to find a new system configuration $\sigma'$, in which the \emph{number of constraint violations is minimal} (ideally, zero). For this, we assume that violating node CPU, memory, and link capacity constraints is equally undesirable.

There are a number of further, secondary objectives, which can be used as tie-breaker to choose from system configurations that have the same number of constraint violations:
\begin{itemize}
\item Total delay of all edges across all overlays
\item Number of instance addition/removal operations required to transition from $\sigma$ to $\sigma'$
\item Maximum  amounts of capacity constraint violations, for each resource type (CPU, memory, link capacity)
\item Total resource consumption of all instances across all overlays, for each resource type (CPU, memory, link capacity)
\end{itemize}
Higher values for these metrics result in higher costs for the system or in lower customer satisfaction, so our objective is to minimize these values. Therefore, our aim is to select a new system configuration $\sigma'$ from the set of system configurations with minimal number of constraint violations that is Pareto-optimal with respect to these secondary metrics.

\subsection{Problem formulation summary}

Our aim is to handle the scaling, placement, and routing for newly requested network services as well as already deployed network services.
Taking this into account, the Template Embedding problem can be summarized as follows:
\begin{itemize}
\item Inputs:
	\begin{itemize}
	\item Substrate network
	\item Template for each network service
	\item Location and data rate of the sources for each network service
	\item For the already deployed network services: overlay and its mapping onto the substrate network
	\end{itemize}
\item Outputs:
	\begin{itemize}
	\item For the newly requested network services: overlay and its mapping onto the substrate network
	\item For the already deployed network services: modified overlay and its modified mapping onto the substrate network
	\end{itemize}
\end{itemize}
Scaling is performed while creating the overlay from the template, while placement and routing are performed when the instances and edges of the overlay are mapped onto the substrate network.

A further important detail concerns the relationship between different network services. The creation of the overlay from the template and its mapping onto the substrate network are defined for each network service separately; however, they share the same substrate network. The objectives defined in Sec.~\ref{subsec:problem} relate to the whole network including all network services, aiming for a global optimum
and potentially resulting in trade-offs among the network services. 
A further connection among different network services may arise if they share the same component type. In this case, it is also possible that the corresponding overlay instances are realized by the same instance.

\section{Complexity}
\label{sec:compl}

\begin{theorem}
\label{thm:npc}
For an instance of the Template Embedding problem as defined in Section~\ref{sec:problem}, deciding whether a solution with no violations exists is NP-complete in the strong sense\footnote{NP-complete in the strong sense means that the problem remains NP-complete even if the numbers appearing in it are constrained between polynomial bounds. Under the P$\ne$NP assumption, this precludes even the existence of a pseudo-polynomial algorithm -- i.e., an algorithm the runtime of which is polynomial if restricted to problem instances with polynomially bounded numbers.}.
\end{theorem}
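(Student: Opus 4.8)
The plan is to prove the two halves of the statement separately: membership in NP, which is routine, and strong NP-hardness, which carries the content. For membership in NP, a certificate is just a system configuration --- the overlays together with the mapping $P_T$ --- and it has polynomial size: since $P_T^{(I)}$ places at most one instance of each component on each node, every overlay has $O(|C_T|\cdot|V|)$ instances and hence polynomially many edges, and each splittable substrate flow can be recorded by its value on every link (which, by standard arguments, may be taken of polynomial bit-length). A verifier then checks in polynomial time that each overlay is a legal scaling of its template, that source instances and flow endpoints are placed consistently, propagates the data rate on each overlay edge in topological order using the $r_j$ functions, evaluates $p_j$ and $m_j$, and confirms that no node- or link-capacity constraint is violated.

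For strong NP-hardness I would reduce from \textsc{3-Partition}, which is NP-complete in the strong sense: given positive integers $a_1,\dots,a_{3m}$ with $B/4<a_i<B/2$ and $\sum_i a_i = mB$, decide whether they can be partitioned into $m$ triples each of sum $B$. From such an instance, build a single network service whose template is the chain $S\to A_1\to A_2\to\cdots\to A_{3m}$, where $S$ is a source component and each $A_\ell$ has one input and one output, forwards its input rate unchanged ($r_{A_\ell}$ is the identity), has $m_{A_\ell}\equiv 0$, and --- the crucial choice --- has $p_{A_\ell}\equiv a_\ell$, i.e.\ a CPU demand equal to the fixed baseline $a_\ell$ \emph{irrespective of the input rate}. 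The substrate network has a source node $v_0$ with $\text{cap}_{\text{cpu}}(v_0)=0$ and $m$ ``bin'' nodes $u_1,\dots,u_m$ with $\text{cap}_{\text{cpu}}(u_j)=B$ (all memory capacities $0$ or large), pairwise joined by links of data rate $3m$ and delay $0$; the single source is $(v_0,S,1)$. All numbers occurring in the construction are polynomially bounded, so strong NP-hardness transfers.

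The step I expect to be the main obstacle is arguing that the \emph{splittability of flows and the freedom to fan an output out to several instances cannot be exploited} --- without this, the construction would collapse to a trivially feasible fractional packing. Recall that a valid overlay must instantiate every template component at least once, since the source's flow must traverse the whole chain; let $k_\ell\ge 1$ be the number of instances of $A_\ell$. Because $p_{A_\ell}$ is the constant $a_\ell$, every $A_\ell$-instance consumes exactly $a_\ell$ CPU no matter how flows are routed, so in a violation-free configuration the bin nodes carry total CPU load $\sum_\ell k_\ell a_\ell \le \sum_j \text{cap}_{\text{cpu}}(u_j)=mB=\sum_\ell a_\ell$, which forces $\sum_\ell (k_\ell-1)a_\ell\le 0$ and hence $k_\ell=1$ for all $\ell$. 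Thus a no-violation solution places exactly one, indivisible, instance per item, necessarily on the bin nodes (none fits on $v_0$), and the CPU constraint of each $u_j$ says precisely that the items placed on it sum to at most $B$; together with $\sum_\ell a_\ell=mB$ and the gap $B/4<a_\ell<B/2$ this is exactly a partition into $m$ triples of sum $B$. Conversely, any 3-partition yields a violation-free configuration: place the instances accordingly and route the single rate-$1$ flow along the chain over the complete-graph topology, which is feasible because each link has capacity $3m$ while at most $3m$ unit flows are in play. Combined with NP-membership this gives strong NP-completeness. The constant, input-independent demand $p_{A_\ell}\equiv a_\ell$ is what makes the argument go through: it renders every duplicate instance strictly costly rather than free, which is exactly what neutralizes the splitting/fan-out flexibility.
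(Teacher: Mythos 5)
Your proof is correct, but it takes a genuinely different route from the paper's. The paper reduces from Set Covering: it uses a fixed three-component template $S\to A\to B$, encodes the set system in the substrate topology (a link from $s_i$ to $a_j$ iff element $i$ lies in set $j$), and enforces the cardinality bound $k$ through a non-linear step function $p_B$ whose value jumps from $1$ to $2$ once the aggregate input rate exceeds $k$. You instead reduce from 3-Partition, pushing all of the combinatorial difficulty into the constant baseline CPU demands $p_{A_\ell}\equiv a_\ell$ of a length-$3m$ chain template embedded over a complete graph of $m$ bin nodes of capacity $B$. Your key lemma --- that the aggregate capacity bound $\sum_\ell k_\ell a_\ell \le mB=\sum_\ell a_\ell$ forces exactly one instance per component, which is precisely what neutralizes splittable flows and output fan-out --- is sound, and the subsequent packing argument using $B/4<a_\ell<B/2$ is standard; the link capacities of $3m$ comfortably accommodate the at most $3m$ unit-rate overlay edges. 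The trade-off between the two reductions is where the hardness is localized: yours uses only linear (constant or identity) resource and rate functions, so it additionally shows that the special case amenable to the MILP formulation is already strongly NP-hard, whereas the paper's construction needs a piecewise-constant $p_B$; conversely, the paper keeps the template of constant size independent of the instance, showing that hardness arises purely from the scaling, placement, and routing decisions over a nontrivial topology rather than from template length. Both constructions use only polynomially bounded numbers, so both correctly yield NP-hardness in the strong sense, and your NP-membership argument matches the paper's.
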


\begin{proof}
It is clear that the problem is in NP: a possible witness for the positive answer is a solution -- i.e., a set of overlays and their embedding into the substrate network -- with 0 violations. The witness has polynomial size and can be verified in polynomial time wrt.\ to the input size.

To establish NP-hardness, we show a reduction from the Set Covering
problem (which is known to be NP-complete in the strong sense
\cite{karp1972reducibility}) to the Template Embedding
problem.
An input of the Set Covering problem consists of a finite set $U$, a finite family $\cal W$ of subsets of $U$ such that their union is $U$, and a number $k\in\mathbb{N}$. The aim is to decide whether there is a subset $\cal Z\subseteq W$ with cardinality at most $k$ such that the union of the sets in $\cal Z$ is still $U$.

From this instance of Set Covering, an instance of the Template Embedding problem is created as follows. The substrate network consists of nodes $V=\{s_1,\ldots,s_{|U|}\}\cup\{a_1,\ldots,a_{|{\cal W}|}\}\cup\{b\}$, where each $s_i$ represents an element of $U$ and each element $a_j$ represents an element of $\cal W$. There is a link from $s_i$ to $a_j$ if and only if the element of $U$ represented by $s_i$ is a member of the set represented by $a_j$. Furthermore, there is a link from each $a_j$ to $b$. The capacities of the nodes are as follows: $\text{cap}_{\text{cpu}}(s_i)=\text{cap}_{\text{mem}}(s_i)=0$ for each $i\in[1,|U|]$, $\text{cap}_{\text{cpu}}(a_j)=0$ and $\text{cap}_{\text{mem}}(a_j)=1$ for each $j\in[1,|{\cal W}|]$, and $\text{cap}_{\text{cpu}}(b)=1$ and $\text{cap}_{\text{mem}}(b)=0$. For each link, its maximum data rate is 1, its delay is 0.

There is a single template consisting of a source component $S$ and two further components $A$ and $B$, and two arcs $(S,A)$ and $(A,B)$. Component $A$ has one input and one output, its resource consumption as a function of the input data rate $\lambda$ is given by $p_A(\lambda)=0$ and $m_A(\lambda)=1$; its output data rate is given by $r_A(\lambda)=1$. Component $B$ has one input and no output, its resource consumption as a function of the input data rate $\lambda$ is given by
\begin{equation*}
p_B(\lambda)=
\begin{cases}
1, & \text{if }\lambda\le k, \\
2, & \text{otherwise,}
\end{cases}
\end{equation*}
and $m_B(\lambda)=0$. In each $s_i$, there is a source corresponding to an instance of $S$ with data rate $\lambda=1$.

\begin{figure}[tb]
\centering
\subfigure[An instance of Set Covering ($k=2$) and a solution (thick lines)]{\hspace*{0.05\columnwidth}\includegraphics[width=0.35\columnwidth]{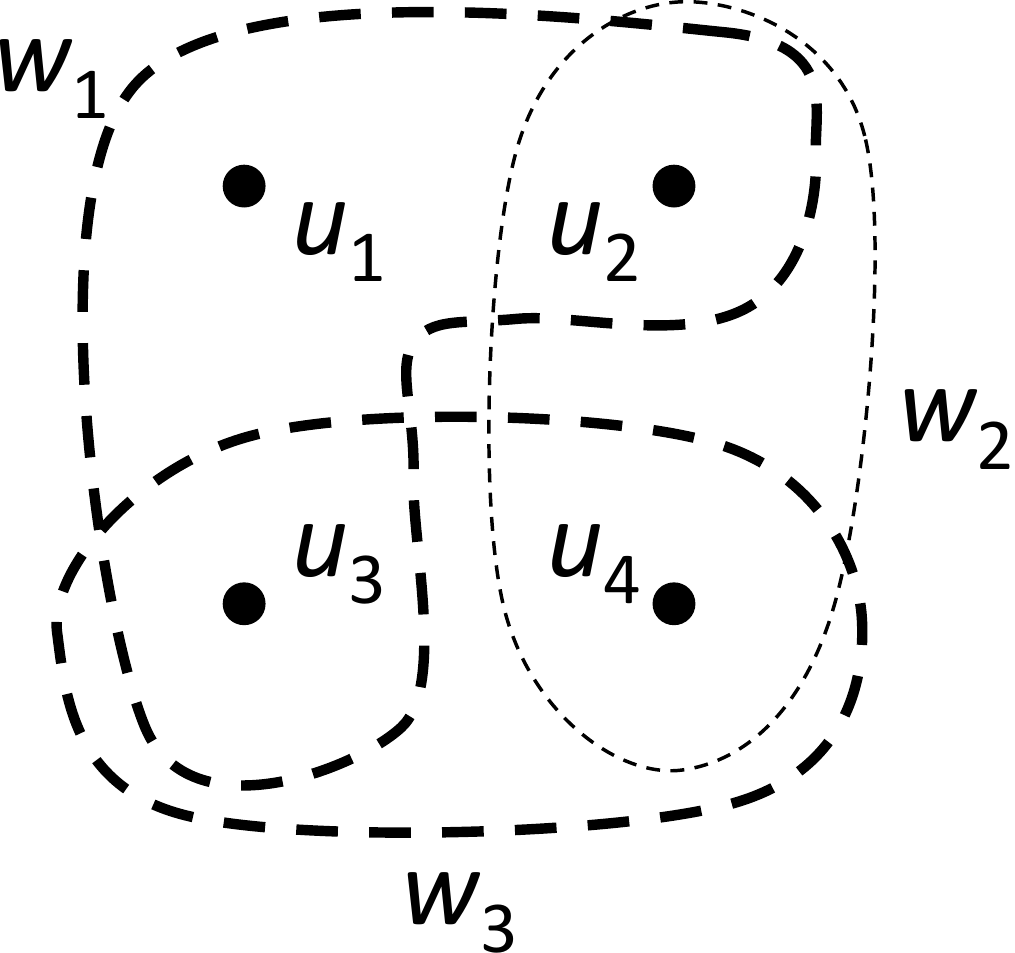}\hspace*{0.05\columnwidth}}
\hspace*{0.05\columnwidth}
\subfigure[The generated instance of Template Embedding and the corresponding solution]{\includegraphics[width=0.45\columnwidth]{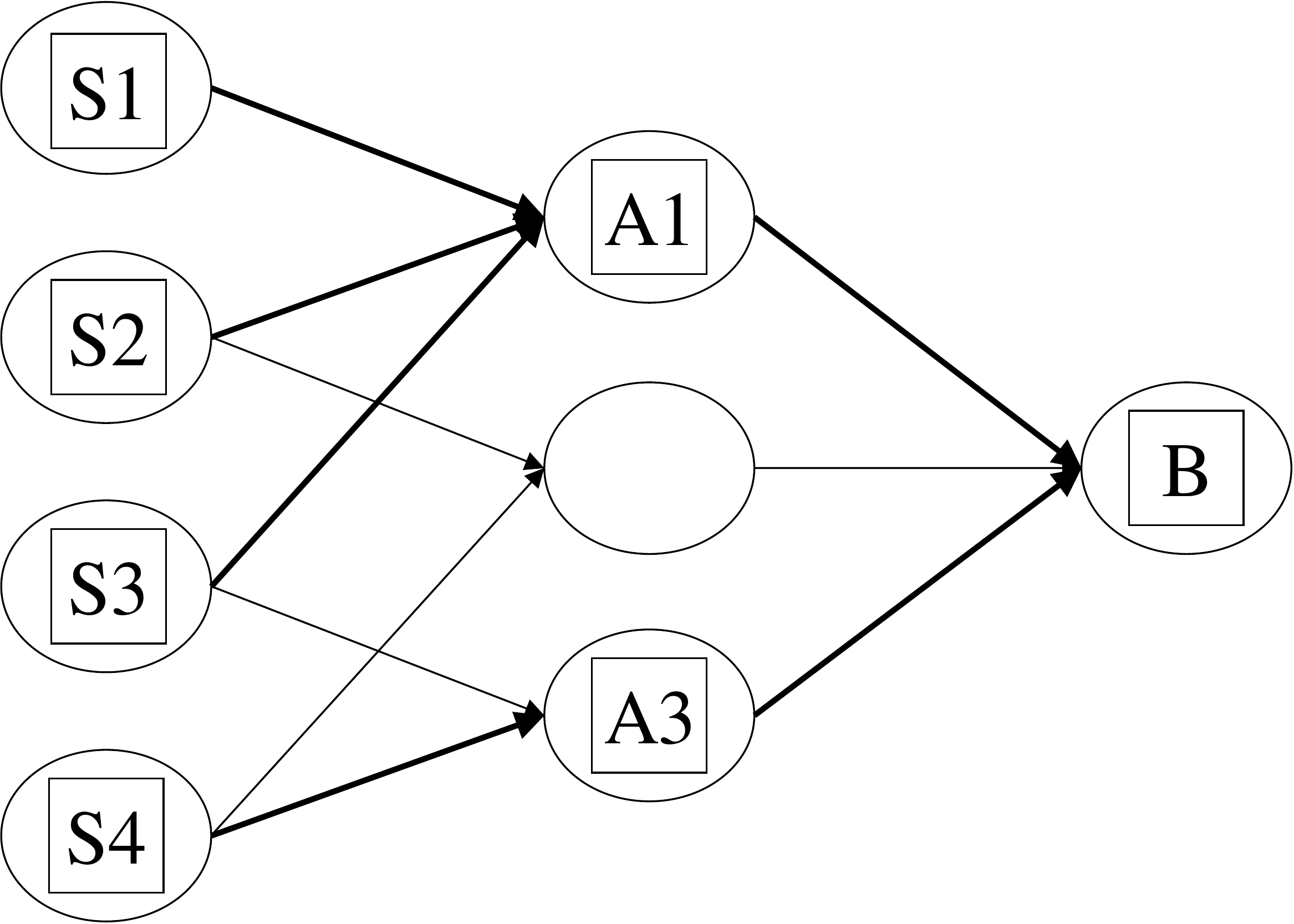}}
\caption{An example for the proof of Theorem \ref{thm:npc}}
\label{fig:proof_npc}
\end{figure}

Suppose first that the original instance of Set Covering is solvable, i.e., there is a subset $\cal Z\subseteq W$ with cardinality at most $k$ such that the union of the sets in $\cal Z$ is $U$. In this case, the generated instance of the Template Embedding problem can also be solved without any violations, as follows (see Fig.\ \ref{fig:proof_npc} for an example). Each $s_i$ must of course host an instance of $S$. In each $a_i$ corresponding to an element of $\cal Z$, an instance of $A$ is created. Since the union of the sets in $\cal Z$ is $U$, each $s_i$ has an outgoing link to at least one $a_j$ hosting an instance of $A$, which can be selected as the target of the traffic leaving the source in $s_i$ through the link $(s_i,a_j)$. Further, a single instance of $B$ is created in node $b$ and each instance of $A$ is connected to $B$ through the $(a_j,b)$ link. Since the number of instances of $A$ is at most $k$, each emitting traffic with data rate 1, the CPU requirement of the instance of $B$ is 1, so that it fits on $b$, and hence we obtained a solution to the Template Embedding problem with no violation.

Now assume that the generated instance of the Template Embedding problem is solvable without violations. Then, we can construct a solution of the  original instance of Set Covering, as we show next. In a solution of the generated instance of the Template Embedding problem, each $s_i$ must host an instance of $S$ and there is no other instance of $S$. Instances of $A$ can only be hosted by $a_j$ nodes because of the memory requirement, and an instance of $B$ can only be hosted in $b$ because of the CPU requirement. We define $\cal Z$ to contain those elements of $\cal W$ for which the corresponding node $a_j$ hosts an instance of $A$. Since each source generates traffic that must be consumed by an instance of $A$ and there is a path (actually, a link) from $s_i$ to $a_j$ only if the set corresponding to $a_j$ contains the element corresponding to $s_i$, it follows that the sets in $\cal Z$ cover all elements of $U$. Moreover, since the instance of $B$ must fit on $b$ and each instance of $A$ generates traffic with data rate 1, it follows that the number of instances of $A$ is at most $k$ and hence $|{\cal Z}|\le k$, thus $\cal Z$ is a solution of the original Set Covering problem.

Since all numbers in the generated instance of the Template Embedding problem are constants, this reduction shows that the Template Embedding problem is indeed NP-hard in the strong sense.
\end{proof}

As a consequence, we can neither expect a polynomial or even pseudo-polynomial algorithm for solving the problem exactly nor a fully polynomial-time approximation scheme, under standard assumptions of complexity theory.


\section{Mixed integer programming approach}
\label{sec:integer}

In this section, we provide a mixed integer programming (MIP) formulation of the problem. On one hand, this serves as a further formalization of the problem; on the other hand, under suitable assumptions (to be detailed in Section~\ref{sec:solve_mip}) an appropriate solver can be used to solve the mixed integer program, yielding an algorithm for the problem.

Based on the assumption that two instances of the same component cannot be mapped to a node, instances can be identified by the corresponding component and the hosting node. This is the basis for our choice of variables, which are explained in more detail in Table~\ref{tab:vars}.

\begin{table}[tb]
\caption{Variables}
\label{tab:vars}
\begin{tabular}{p{8mm}p{8mm}p{60mm}}
\toprule
Name & Domain & Definition \\
\midrule
$x_{j,v}$ & $\{0,1\}$ & 1 iff an instance of component $j{\in} {\cal C}$ is mapped to node $v{\in} V$ \\
$y_{a,v,v'}$ & $\mathbb{R}_{\ge 0}$ & If $a{\in} A_T$ is an arc from an output of $j{\in} C_T$ to an input of $j'{\in} C_T$, an instance of $j$ is mapped on $v{\in} V$, and an instance of $j'$ is mapped on $v'{\in} V$, then $y_{a,v,v'}$ is the data rate of the corresponding flow from $v$ to $v'$; otherwise it is 0 \\
$z_{a,v,v',l}$ & $\mathbb{R}_{\ge 0}$ & If $a{\in} A_T$ is an arc from an output of $j{\in} C_T$ to an input of $j'{\in} C_T$, an instance of $j$ is mapped on $v{\in} V$, and an instance of $j'$ is mapped on $v'{\in} V$, then $z_{a,v,v',l}$ is the data rate of the corresponding flow from $v$ to $v'$ that goes through link $l{\in} L$; otherwise it is 0 \\
$\Lambda_{j,v}$ & $\mathbb{R}_{\ge 0}^{|\text{In}(j)|}$ & Vector of data rates on the inputs of the instance of component $j{\in} C_T$ on node $v{\in} V$, or an all-zero vector if no such instance is mapped on $v$ \\
$\Lambda'_{j,v}$ & $\mathbb{R}_{\ge 0}^{|\text{Out}(j)|}$ & Vector of data rates on the outputs of the instance of component $j{\in} C_T$ on node $v{\in} V$, or an all-zero vector if no such instance is mapped on $v$ \\
$\varrho_{j,v}$ & $\mathbb{R}_{\ge 0}$ & CPU requirement of the instance of component $j{\in} C_T$ on node $v{\in} V$, or zero if no such instance is mapped on $v$ \\
$\mu_{j,v}$ & $\mathbb{R}_{\ge 0}$ & Memory requirement of the instance of component $j{\in} C_T$ on node $v{\in} V$, or zero if no such instance is mapped on $v$ \\
$\omega_{v,\text{cpu}}$ & $\{0,1\}$ & 1 iff the CPU capacity of node $v{\in} V$ is exceeded \\
$\omega_{v,\text{mem}}$ & $\{0,1\}$ & 1 iff the memory capacity of node $v{\in} V$ is exceeded \\
$\omega_l$ & $\{0,1\}$ & 1 iff the maximum data rate of link $l{\in} L$ is exceeded \\
$ \psi_\text{cpu} $ & $\mathbb{R}_{\ge 0}$ & Maximum CPU over-subscription over all nodes \\
$ \psi_\text{mem} $ & $\mathbb{R}_{\ge 0}$ & Maximum memory over-subscription over all nodes \\
$ \psi_\text{dr} $ & $\mathbb{R}_{\ge 0}$ & Maximum capacity over-subscription over all links\\
$\zeta_{a,v,v',l}$ & $\{0,1\}$ & 1 iff $z_{a,v,v',l}>0$ \\
$\delta_{j,v}$ & $\{0,1\}$ & 1 iff $x_{j,v}\ne x^*_{j,v}$ \\
\bottomrule
\end{tabular}
\end{table}


We use the following notations for formalizing the constraints and objectives. ${\cal C}{=}\bigcup_{T\in {\cal T}} C_T$ denotes the set of all components, ${\cal A}{=}\bigcup_{T\in {\cal T}} A_T$ the set of all arcs, and ${\cal S}{=}\bigcup_{T\in {\cal T}} S(T)$ the set of all sources across all network services that we want to map to the network. $M$, $M_1$, and $M_2$ denote sufficiently large constants. $(\Lambda_{j,v})_k$ denotes the $k$th component of the vector $\Lambda_{j,v}$. $\underbar{0}$ denotes a zero vector of appropriate length. 

Information about existing instances should also be taken into account during the decision process. For this, we define $x^*_{j,v} (\forall j \in {\cal C}, v \in V)$ as a constant given as part of the problem input. If there is a previously mapped instance of component $j$ on node $v$ in the network, $x^*_{j,v}$ is 1, otherwise it is 0.

\subsection{Constraints}

Here we define the sets of constraints that enforce the required
properties of the template embedding process.



\subsubsection{Mapping consistency rules}

{\footnotesize
\begin{align}
\forall (v,j,\lambda)\in {\cal S}: && x_{j,v} & = 1 \\
\forall (v,j,\lambda)\in {\cal S}: && \Lambda'_{j,v} & = \lambda \\
\forall j\in {\cal C}, \forall v\in V, k \in [1,|\text{In}(j)|]: && (\Lambda_{j,v})_k & \le  M\cdot x_{j,v} \\
\forall j\in {\cal C}, \forall v\in V, k \in [1,|\text{Out}(j)|]: && (\Lambda'_{j,v})_k & \le  M\cdot x_{j,v} \\
\forall j\in {\cal C}, \forall v\in V: && x_{j,v}-x^*_{j,v} & \le \delta_{j,v} \\
\forall j\in {\cal C}, \forall v\in V: && x^*_{j,v}-x_{j,v} & \le \delta_{j,v}
\end{align}
}

Constraints (1) and (2) enforce that the placement respectively the output data rate of source component instances are in line with the tuples specified in $\cal S$. Constraint (3) guarantees the consistency between the variables $\Lambda_{j,v}$ and $x_{j,v}$: if $\Lambda_{j,v}$ has a positive component, then $x_{j,v}$ must be 1, i.e., only an existing component instance can process the incoming flow. Constraint (3) is analogous for the outgoing flows, represented by the $\Lambda'_{j,v}$ variables. Constraints (5) and (6) together ensure that $\delta_{j,v}=1$ if and only if $x_{j,v}\ne x^*_{j,v}$.

\subsubsection{Flow and data rate rules}
{\footnotesize
\begin{multline}
\forall j\in {\cal C}, \text{$j$ not a source component}, \forall v\in V:\\
\Lambda'_{j,v} = r_j(\Lambda_{j,v})-(1-x_{j,v})\cdot r_j(\underbar{0})
\end{multline}
\begin{multline}
\forall j\in {\cal C}, \forall v\in V, k \in [1,|\text{In}(j)|]:\\
(\Lambda_{j,v})_k=\sum_{a\text{ ends in input }k\text{ of }j, v'\in V} y_{a,v',v}
\end{multline}
\begin{multline}
\forall j\in {\cal C}, \forall v\in V, k \in [1,|\text{Out}(j)|]:\\
(\Lambda'_{j,v})_k=\sum_{a\text{ starts in output }k\text{ of }j, v'\in V} y_{a,v,v'}
\end{multline}
\begin{multline}
\forall a\in{\cal A},\forall v,v_1,v_2\in V:\\
\sum_{vv'\in L}z_{a,v_1,v_2,vv'}-\sum_{v'v\in L}z_{a,v_1,v_2,v'v} =\\ 
=\begin{cases}
0 & \text{if }v\ne v_1\text{ and }v\ne v_2 \\
y_{a,v_1,v_2} & \text{if }v=v_1\text{ and }v_1\ne v_2 \\
0 & \text{if }v=v_1=v_2
\end{cases}
\end{multline}
\begin{align}
\forall a\in{\cal A}, \forall v,v'\in V, \forall l\in L: && z_{a,v,v',l} & \le M\cdot \zeta_{a,v,v',l}
\end{align}
}

Constraint (7) computes the data rate on the outputs of a processing component instance based on the data rates on its inputs and the $r_j$ function of the underlying component. The constraint is formulated in such a way that for $x_{j,v}=1$, $\Lambda'_{j,v} = r_j(\Lambda_{j,v})$, whereas for $x_{j,v}=0$ (in which case also $\Lambda_{j,v}=0$ because of Constraint (3)), also $\Lambda'_{j,v}=0$ so that there is no contradiction with Constraint (4). Constraint (8) computes the data rate on the inputs of a component instance as the sum of the data rates on the links ending in that input. Similarly, Constraint (9) ensures that the data rate on the outputs of a component instance is distributed on the links starting in that output. Constraint (10) is the flow conservation rule, also ensuring the right data rate of each flow, thus connecting the $z_{a,v,v',l}$ variables (flow values on individual links) and the $y_{a,v,v'}$ variables (flow data rate). Constraint (11) sets the $\zeta_{a,v,v',l}$ variables (on the basis of the $z_{a,v,v',l}$ variables), so that they can be used later on in the objective function (Section~\ref{subsec:objective}).

\subsubsection{Calculation of resource consumption}

{\footnotesize
\begin{align}
\forall j\in {\cal C}, \forall v\in V: && \varrho_{j,v} & = p_j(\Lambda_{j,v})-(1-x_{j,v})\cdot p_j(\underbar{0}) \\
\forall j\in {\cal C}, \forall v\in V: && \mu_{j,v} & = m_j(\Lambda_{j,v})-(1-x_{j,v})\cdot m_j(\underbar{0})
\end{align}
}

Constraints (12) and (13) calculate CPU respectively memory
consumption of each component instance based on the $p_j$ and $m_j$
functions of the underlying component\footnote{Adding more resource types would be
  reflected by adding corresponding constraints here.}. The logic here is analogous to
that of Constraint (7).

\subsubsection{Capacity constraints}
{\footnotesize
\begin{align}
\forall v\in V: & \quad \sum_{j\in {\cal C}}\varrho_{j,v}  \le \text{cap}_{\text{cpu}}(v)+M\cdot\omega_{v,\text{cpu}} \\
\forall v\in V: & \quad \sum_{j\in {\cal C}}\varrho_{j,v} - \text{cap}_{\text{cpu}}(v) \leq \psi_\text{cpu} \\
\forall v\in V: & \quad \sum_{j\in {\cal C}}\mu_{j,v}  \le \text{cap}_{\text{mem}}(v)+M\cdot\omega_{v,\text{mem}} \\
\forall v\in V: & \quad \sum_{j\in {\cal C}}\mu_{j,v} - \text{cap}_{\text{mem}}(v) \leq \psi_\text{mem} \\
\forall l\in L: & \quad \sum_{a\in {\cal A};v,v'\in V}z_{a,v,v',l} \le b(l)+M\cdot\omega_l \\
\forall l\in L: & \quad \sum_{a\in {\cal A};v,v'\in V}z_{a,v,v',l} - b(l) \leq \psi_\text{dr}
\end{align}
}

The aim of these constraints is to set the $\omega$ and $\psi$
variables (based on the already defined $\varrho$, $\mu$ and $z$
variables), which will be used in the objective function (Section~\ref{subsec:objective}). Constraint (14) ensures that $\omega_{v,\text{cpu}}$ will be 1 if the CPU capacity of node $v$ is overloaded, while Constraint (15) ensures that $\psi_\text{cpu}$ will be at least as high as the amount of CPU overload of any node (the appearance of $\psi_\text{cpu}$ in the objective function will guarantee that it will be exactly the maximum amount of CPU overload and not higher than that). Constraints (16), (17) do the same for memory overloads and Constraints (18), (19) do the same for the overload of link capacity.

\subsubsection{Interplay of the constraints}

To illustrate the interplay of the constraints, we assume that we need to
optimize the embedding shown in 
Fig.~\ref{fig:mapping}. Constraints (1) and (2) ensure that instances of the
source component, i.e., S1 and S2, are embedded and their output
data rates are set correctly. Constraint (9) ensures that these data rates are
then handed out as flows that can only end up in instances of A.
These flows are mapped to network links and instances of A are assigned input
data rates using Constraints
(10) and (8), respectively.
That being set, Constraint (3) marks the instances A1 and A2
as embedded, and Constraint (7) sets their output data rates using the respective $r_j$ function. In a similar 
way, the rest of the components are instantiated and embedded in the network. 

Constraints (5) and (6) ensure that the $\delta_{j,v}$ variables are set correctly.  Constraints (12) and (13) compute the resource
consumption of each instance based on the input data rates and the 
corresponding $p_j$ and $m_j$ functions. Constraints (14)--(19) make sure that
over-subscription of node and link capacities are captured correctly, and 
collect the maximum value of over-subscription for each resource type. This 
maximum value is used in the objective function described in Section~\ref{subsec:objective}, which drives the decisions based on the constraints. 

\subsection{Optimization objective}
\label{subsec:objective}

We formalize the optimization objective based on the goals defined in 
Section~\ref{subsec:problem} as follows:

{\footnotesize
\begin{multline}
\text{minimize} \quad M_1\cdot\Big( \sum_{v\in V}(\omega_{v,\text{cpu}} + \omega_{v,\text{mem}}) + \sum_{l\in L}\omega_{l}\Big)+ \\
+ M_2\cdot\Big(\sum_{\substack{a\in {\cal A} \\ v,v'\in V \\ l\in L}}(d(l)\cdot\zeta_{a,v,v',l})+ \sum_{\substack{j\in {\cal C} \\ v\in V}}\delta_{j,v}\Big)+ \\
+ \psi_\text{cpu} + \psi_\text{mem} + \psi_\text{dr} + \sum_{\substack{j\in {\cal C} \\ v\in V}}(\varrho_{j,v} + \mu_{j,v}) + \sum_{\substack{a\in {\cal A} \\ v,v'\in V \\ l\in L}}z_{a,v,v',l}
\end{multline}
}

By assigning sufficiently large values to $M_1$ and $M_2$,
we can
achieve the following goals with the given priorities (1 being the highest 
priority):
\begin{enumerate}
     \item Number of capacity constraint violations over all nodes and links is minimized.
     \item Template arcs are mapped to network paths in such a way that their total latency is minimized. Moreover, the
     number of instances that need to be started/stopped is minimized.
     \item The maximum value for capacity constraint violations over all nodes and links is minimized. Also, overlay instances and the edges among them are created in a way that their resource consumption is minimized. 
\end{enumerate} 

The objective function is in line with the objectives defined in Section~\ref{subsec:problem}. The primary objective is to minimize the number of constraint violations; a sufficiently large $M_1$ ensures that a decrease in the first term of the objective function has larger impact than any change in the other terms. Moreover, the resulting solution $\sigma'$ will be Pareto-optimal with respect to the other, secondary metrics: otherwise, there would be another solution $\sigma''$ that is as good as $\sigma'$ according to each secondary metric and strictly better than $\sigma'$ in at least one secondary metric, but then, $\sigma''$ would lead to a lower overall value of the objective function.

This mixed integer program can be used for initial embedding of service templates
as well as for optimizing existing embeddings. However, for the initial embedding of newly requested network services,
the term $\sum_{j\in {\cal C}, v\in V}\delta_{j,v}$ 
should be removed from the objective function because it would introduce an unwanted bias towards embeddings with fewer instances, although it is possible that having more instances can decrease the overall cost of the solution.

\subsection{Solving the mixed integer program}
\label{sec:solve_mip}

All our constraints are linear equations and linear inequalities, and also the objective function is linear. Hence, if the functions $p_j$, $m_j$, and $r_j$ are linear for all $j\in{\cal C}$, then we obtain a mixed-integer linear program (MILP), which can be solved by appropriate solvers. For non-linear functions, a piecewise linear approximation may make it possible to use MILP solvers to obtain good (although not necessarily optimal) solutions.


\section{Heuristic approach}
\label{sec:heur}

Now we present a heuristic algorithm that is not guaranteed to find an
optimal solution but is much faster than the mixed integer programming
approach. Moreover, it has the advantage that it does not require
the functions $p_j$, $m_j$, and $r_j$ to be linear.


The heuristic constructs the new solution from the existing one by
means of a series of small local changes.\footnote{Also the placement of a new service is done with a series of small local changes, creating component instances one by one.}
While doing so, it has to be ensured that (i) the instantiation of source components is in line with the given data sources, (ii) the data flows produced by each instance are routed to appropriate instances, and (iii) capacity constraints are satisfied as much as possible. This can be achieved by iterating through the instances of each overlay once in a topological order, possibly creating new instances on the fly if necessary. Note that this may indeed be necessary, for example, if a new data source appeared or the output data rate of a data source increased. In each step, the algorithm aims at economical use of resources, e.g., by only creating new instances if necessary, deleting unneeded instances, or preferring short paths.

The heuristic is shown in Algorithm~\ref{alg:heur_main}. It starts by checking that each service has a corresponding overlay and each overlay corresponds to a service (lines 1--5). If a new service has been started or an existing service has been stopped since the last invocation of the algorithm, the corresponding overlay is created or removed at this point.

Next, the mapping of the sources and source components is checked and updated if necessary (lines 6--11): if a new source emerged, an instance of the corresponding source component is created; if the data rate of a source changed, then the output data rate of the corresponding source component instance is updated; if a source disappeared, then the corresponding source component instance is removed.

Finally, to propagate the changes of the sources to the processing instances, we need to iterate over all instances and ensure that the new output data rates, which are determined by the new input data rates, are discharged correctly by outgoing flows (lines 12--24). For this purpose, it is important to consider the instances in a topological order (according to the overlay) so that when an instance is dealt with, its incoming flows have already been updated. If a change in the outgoing flows is necessary, then the \textsc{increase} or \textsc{decrease} procedures are called.

\begin{algorithm}[htb]
\caption{Main procedure of the heuristic algorithm}
\label{alg:heur_main}
\begin{algorithmic}[1]
\small
\If{$\exists G_{\text{OL}}(T)$ with $T\not\in {\cal T}$}
	\State remove $G_{\text{OL}}(T)$
\EndIf
\ForAll{$T\in {\cal T}$}
	\If{$\nexists G_{\text{OL}}(T)$}
		\State create empty overlay $G_{\text{OL}}(T)$
	\EndIf
	\ForAll{$(v,j,\lambda)\in S(T)$}
		\If{$\nexists i\in I_{\text{OL}}$ with $c(i)=j$ and $P_T^{(I)}(i)=v$}
			\State create $i\in I_{\text{OL}}$ with $c(i)=j$ and $P_T^{(I)}(i)=v$
		\EndIf
		\State set output data rate of $i$ to $\lambda$
	\EndFor
	\If{$\exists i\in I_{\text{OL}}$, where $c(i)$ is a source component but $\nexists (P_T^{(I)}(i),c(i),\lambda)\in S(T)$ for any $\lambda$}
		\State remove $i$
	\EndIf
	\ForAll{$i\in I_{\text{OL}}$ in topological order}
		\If{all input data rates of $i$ are 0}
			\State remove $i$ and go to next iteration
		\EndIf
		\State compute output data rates of $i$
		\ForAll{output $k$ of $i$}
			\State $\Phi$: set of flows currently leaving output $k$
			\State $\lambda$: sum of the data rates of the flows in $\Phi$
			\State $\lambda'$: new data rate on output $k$
			\If{$\lambda'<\lambda$}
				\State $\cal E$: set of edges leaving output $k$
				\State \Call{decrease}{$\cal E$,$\lambda-\lambda'$}
			\ElsIf{$\lambda'>\lambda$}
				\State \Call{increase}{$i$,$k$,$\Phi$,$\lambda'-\lambda$}
			\EndIf
		\EndFor
	\EndFor
\EndFor
\end{algorithmic}
\end{algorithm}

The auxiliary subroutines are detailed in Algorithm~\ref{alg:aux}. \textsc{decrease} first removes as many edges as possible (lines 3--6); when a further decrease is necessary but no more edges can be removed, it reduces the next flow on each link by the same factor to achieve the required reduction (lines 7--9). \textsc{increase} first checks if new instances need to be created to be consistent with the template (lines 12--16), then tries to increase the existing flows (lines 17--19). If this is not sufficient to achieve the necessary increase, it creates further instances and flows (lines 20--23).

\begin{algorithm}[htb]
\caption{Auxiliary methods of the heuristic}
\label{alg:aux}
\begin{algorithmic}[1]
\small
\State\Comment{Decrease the flows on the edges in $\cal E$ by $\Delta\lambda$ in total}
\Procedure{decrease}{$\cal E$,$\Delta\lambda$}
	\State sort $\cal E$ in non-decreasing order of flow data rate
	\ForAll{$e\in {\cal E}$ while flow data rate $\lambda(e)\le\Delta\lambda$}
		\State $\Delta\lambda:=\Delta\lambda-\lambda(e)$
		\State remove $e$
	\EndFor
	\If{$\Delta\lambda>0$}
		\State let $e$ be the next edge
		\State reduce flow of $e$ by a factor of $(\lambda(e)-\Delta\lambda)/\lambda(e)$
	\EndIf
\EndProcedure
\State\Comment{Increase the flows in $\Phi$ leaving output $k$ of instance $i$ by $\Delta\lambda$ in total}
\Procedure{increase}{$i$,$k$,$\Phi$,$\Delta\lambda$}
	\ForAll{arc $(c(i),j)$ leaving output $k$ of $c(i)$}
		\If{$\nexists i'\in I_{\text{OL}}$ with $c(i')=j$ and $ii'\in E_{OL}$}
			\State $\varphi:=\;$\Call{createInstanceAndFlow}{$j$, $i$, $\Delta\lambda$}
			\State $\Delta\lambda:=\Delta\lambda-(\text{data rate of } \varphi)$
			\State $\Phi:=\Phi\cup\{\varphi\}$
		\EndIf
	\EndFor
	\ForAll{$\varphi\in\Phi$}
		\State $d:=\;$\Call{incrFlow}{$\varphi$,$\Delta\lambda$}
		\State $\Delta\lambda:= \Delta\lambda-d$
	\EndFor
	\While{$\Delta\lambda>0$}
		\State $(c(i),j)$: random arc leaving output $k$ of $c(i)$
		\State $\varphi:=\;$\Call{createInstanceAndFlow}{$j$, $i$, $\Delta\lambda$}
		\State $\Delta\lambda:=\Delta\lambda-(\text{data rate of }\varphi)$
	\EndWhile
\EndProcedure
\State\Comment{Create an instance of component $j$ with flow from instance $i$ of high data rate (capped at cutoff)}
\Procedure{createInstanceAndFlow}{$j$,$i$,cutoff}
	\ForAll{$v\in V$}
		\State create temporary instance $i'$ of $j$ on $v$
		\State $\varphi$: flow of data rate 0 from $i$ to $i'$
		\State \Call{incrFlow}{$\varphi$,cutoff}
		\State remove $i'$ and $\varphi$
	\EndFor
	\State create instance of $j$ on node resulting in best flow
\EndProcedure
\State\Comment{Increase flow data rate by at most $d$}
\Procedure{incrFlow}{$\varphi$,$d$}
	\State $v:=\;$start node of $\varphi$
	\State $v':=\;$end node of $\varphi$
	\State $\beta_1:=\;$maximum flow based on $cap_{CPU}(v')$
	\State $\beta_2:=\;$maximum flow based on $cap_{mem}(v')$
	\State $d:=\min(d,\beta_1,\beta_2)$
	\State $P$: $v\leadsto v'$ path of high bandwidth ($b$) and low latency
	\State increase $\varphi$ by $\min(b,d)$ along $P$
\EndProcedure
\end{algorithmic}
\end{algorithm}

In the \textsc{createInstanceAndFlow} procedure (called by \textsc{increase} to create a new instance of a component together with a flow from an existing instance), all nodes of the substrate network are temporarily tried for hosting the new instance. The candidate that leads to the best flow is selected (lines 26--31). Finally, the \textsc{incrFlow} procedure (called by both \textsc{increase} and \textsc{createInstanceAndFlow}) increases the data rate of a flow along a new path (lines 34--40).

As can be seen, we avoid computing maximum flows. This is because the
running time of the best known algorithms for this purpose are worse
than quadratic with respect to the size of the graph
\cite{hochbaum2008pseudoflow}.  Since these subroutines are run many
times, the high time complexity would be problematic for large
substrate networks. Instead, each run of \textsc{incrFlow} increases a
flow only along one new path. For finding the path, a modified
best-first-search \cite{korf1993linear} is used, which runs in linear time. It should be noted that split flows can still be created if \textsc{incrFlow} is run multiple times for a flow.

When improving a flow and when selecting from multiple possible flows,
the \textsc{incrFlow} and \textsc{createInstanceAndFlow} routines must
strike a balance between flow data rate and the increase in overall
delay of the solution. Our strategy for comparing two possible flows
is to first compare their data rates and compare their latencies only
if there is a tie. This strategy is used in line 31 to select the best
flow. The rationale is that selecting flows with high data rate leads
to a small number of instances to be created. However, we also employ
a cutoff mechanism: flow data rates above the cutoff (the increase in
data rate that we want to achieve) do not add more value and are hence
regarded to be equal to the cutoff value. This increases the
likelihood of a tie, so that the tie-breaking method of preferring
lower latencies is also important. An analogous strategy is used in
line 39 to compare paths: the primary criterion is to prefer paths
with higher bandwidth -- up to the given cutoff $d$ -- and, in case of
a tie, to prefer paths with lower latency. For finding the best path,
a modified best-first-search is used, in which the nodes to be visited are stored in a priority queue, where priority is defined in accordance with the comparison relation described above.





\section{Evaluation}
\label{sec:eval}

We implemented the presented algorithms in the form of a C++ program. For solving the MILP, Gurobi Optimizer 7.0.1\footnote{\url{http://www.gurobi.com/}} was used. For substrate networks, we used benchmarks for the Virtual Network Mapping Problem\footnote{\url{https://www.ac.tuwien.ac.at/files/resources/instances/vnmp}} from Inf\"uhr and Raidl \cite{infuhr2013solving}. As service templates, we used examples from IETF's Service Function Chaining Use Cases \cite{draft-liu-sfc-use-cases-08}.

\subsection{An example}

\begin{figure}[tb]
\centering
\includegraphics[height=55mm]{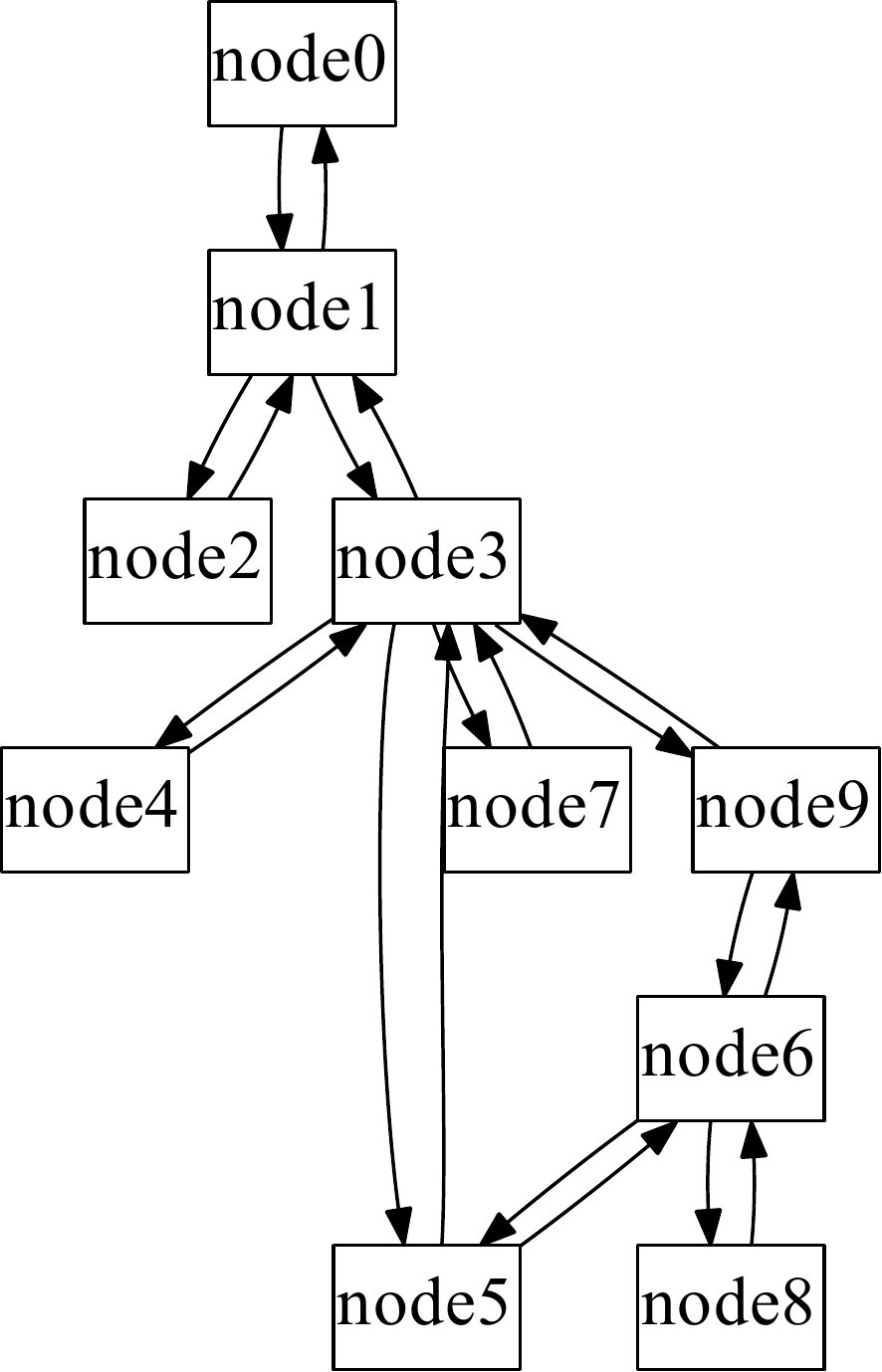}
\caption{Example substrate network}
\label{fig:graph}
\end{figure}

\begin{figure*}[tb]
\centering
\subfigure[\label{fig:state2}Initial embedding]{\includegraphics[height=54mm]{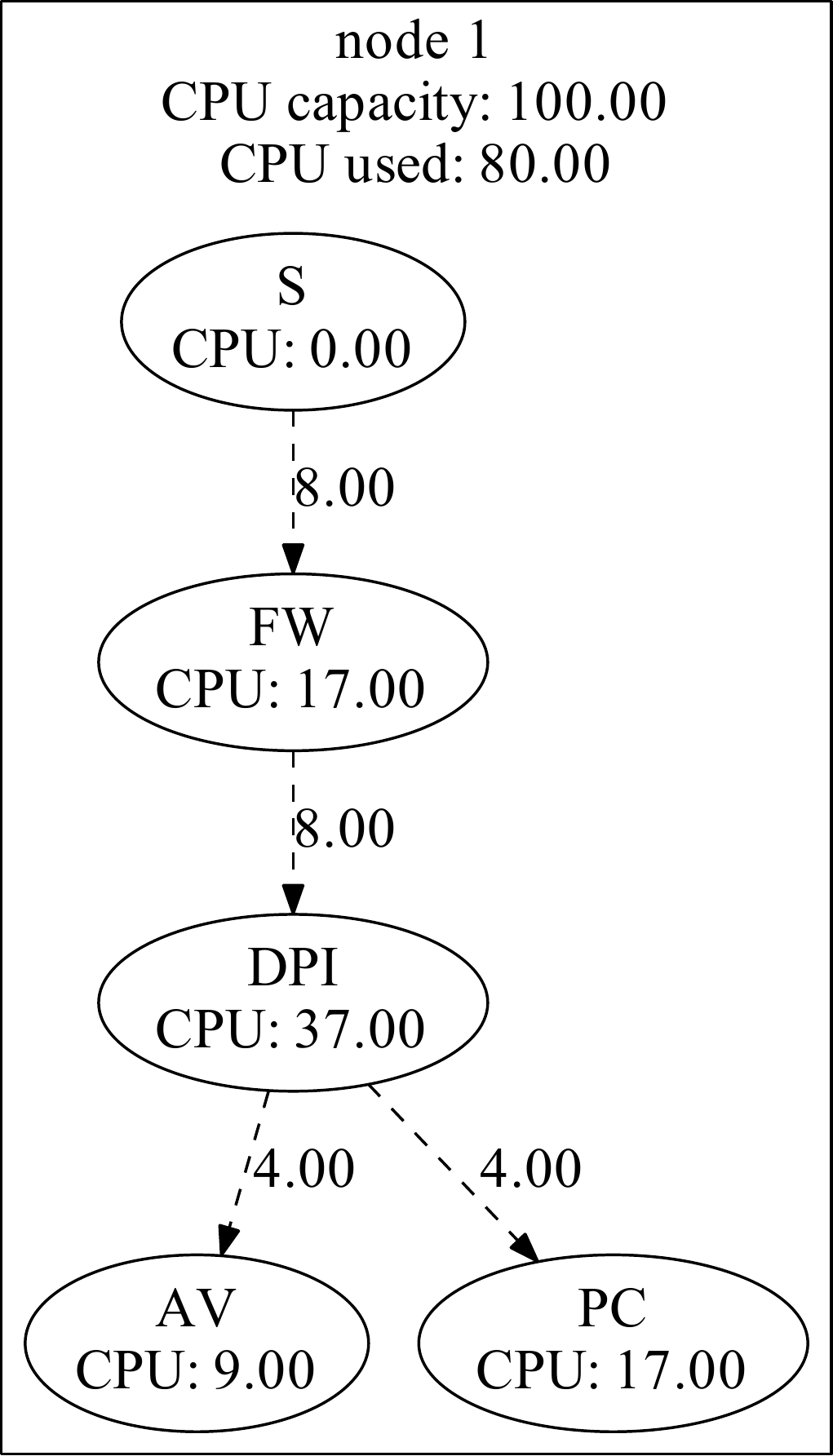}}
\hfil
\subfigure[\label{fig:state3}Result of increased source data rate]{\includegraphics[height=54mm]{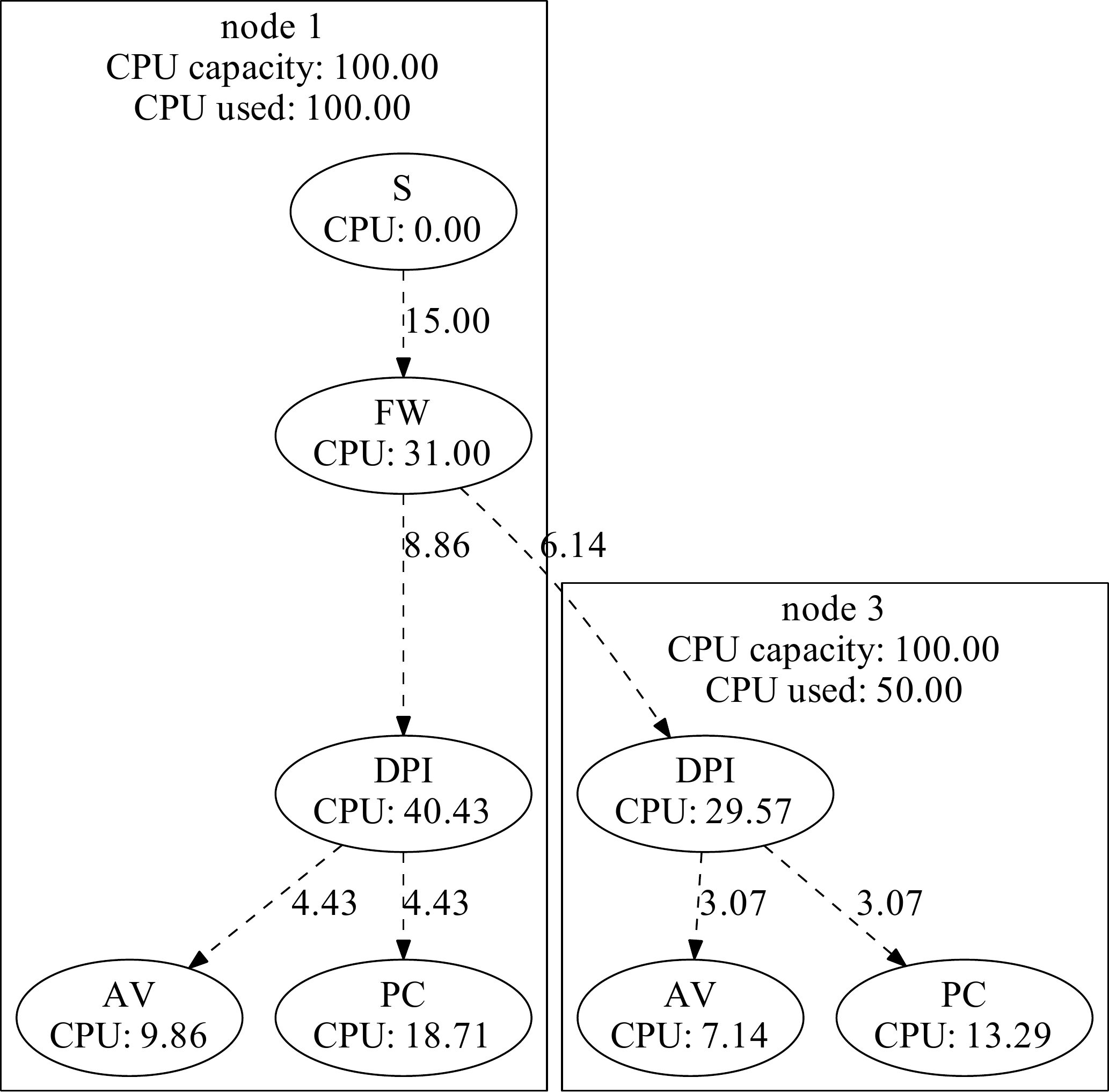}}
\hfil
\subfigure[\label{fig:state4}Result of the emergence of a second source]{\includegraphics[height=54mm]{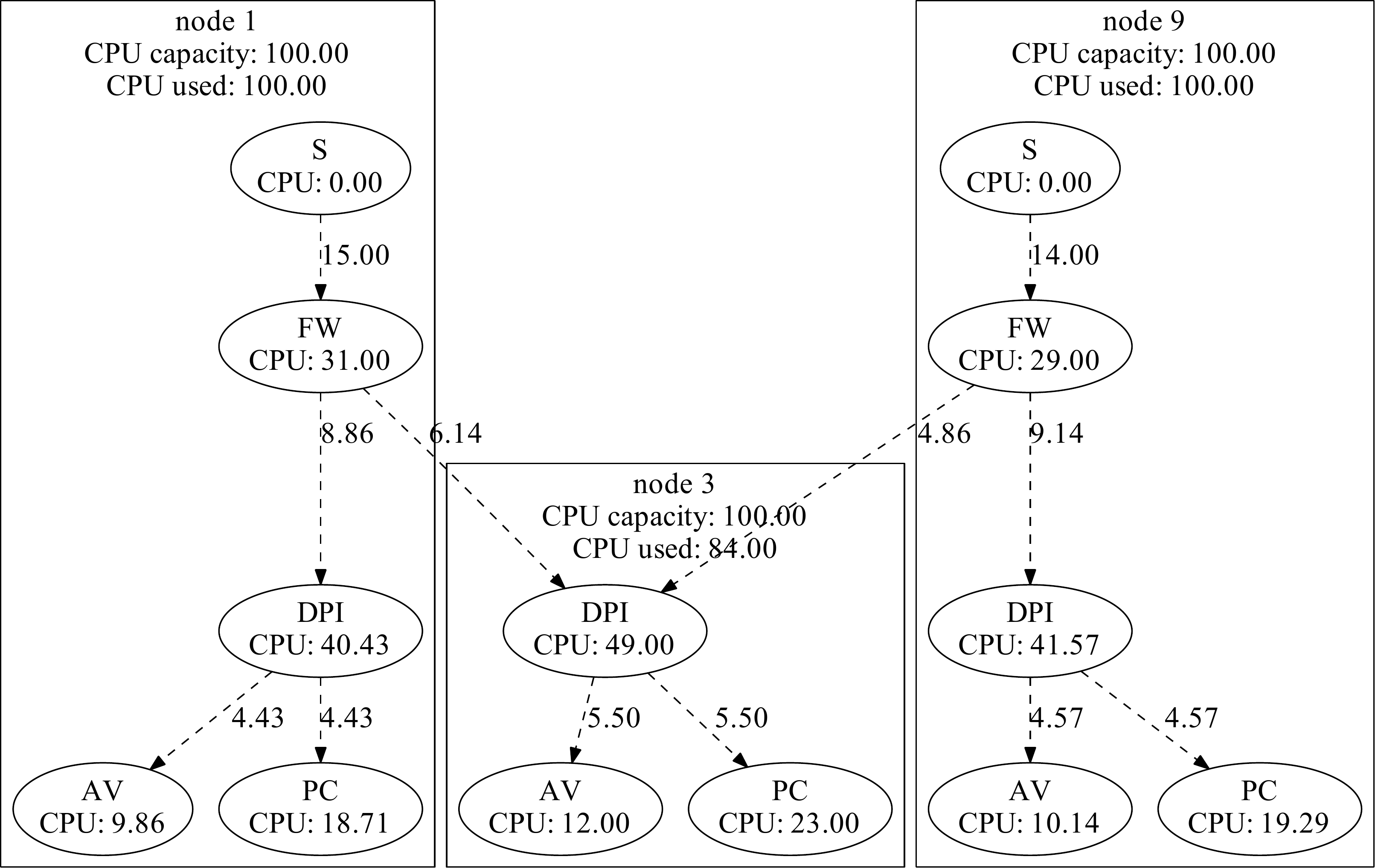}}
\caption{Illustrative example (memory values not shown for better readability)}
\end{figure*}

First, we illustrate our approach on a small substrate network of 10
nodes and 20 arcs (see Fig.~\ref{fig:graph}) in which the CPU and
memory capacity of each node is both 100. In this network, a service
consisting of a source (S), a firewall (FW), a deep packet inspection
(DPI) component, an anti-virus (AV) component, and a parental control (PC) component is deployed. Initially, there is a single source in node 1 with a moderate data rate. As a result, our algorithm
deploys all components of the service in node~1 (see Fig.~\ref{fig:state2}).

Subsequently, the data rate of the source increases. As a result, the resource demand of the processing components of the service increases so that they do not fit onto node~1 anymore. Our algorithm automatically re-scales the service by duplicating the DPI, AV, and PC components and automatically places the newly created instances on a nearby node, namely node~3 (see Fig.~\ref{fig:state3}).

Later on, a second source emerges for the same service on node~9. The algorithm automatically decides to create new processing component instances on node~9 to process as much as possible of the traffic of the new source locally. The excess traffic from the new FW instance that cannot be processed locally due to capacity constraints is routed to the existing DPI, AV, and PC instances on node~3 because node~3 still has sufficient free capacity (see Fig.~\ref{fig:state4}).

Already this small example shows the difficult trade-offs that template embedding involves. Next, we show that our approach is capable of handling also much more complex scenarios.

\subsection{Comparison of the algorithms}

\begin{figure}[tb]
\centering
\includegraphics[width=\columnwidth]{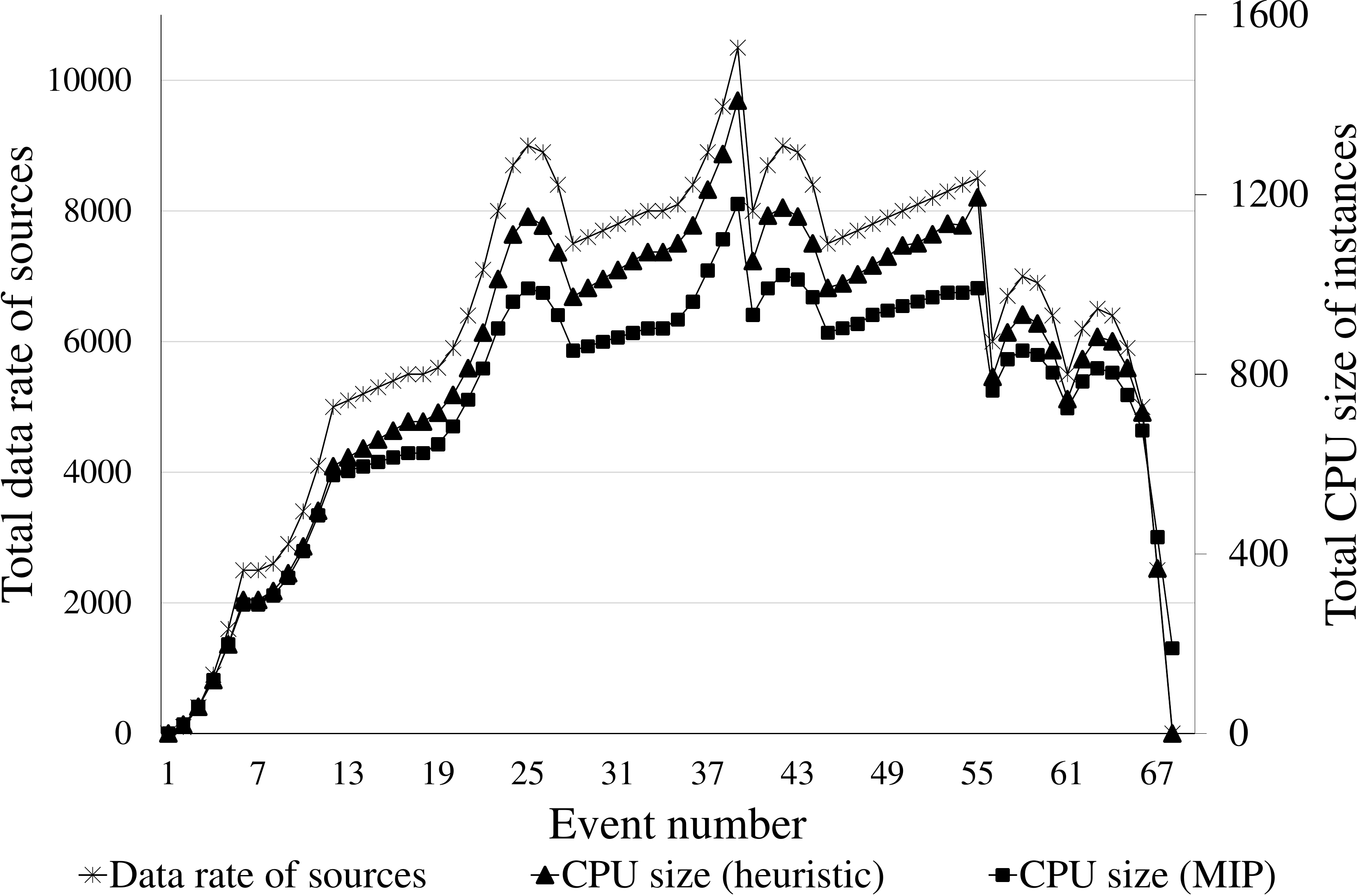}
\caption{Temporal development of the demand and the allocated capacity in a complex scenario}
\label{fig:scenario}
\end{figure}

\begin{figure}[tb]
\includegraphics[width=0.9\columnwidth]{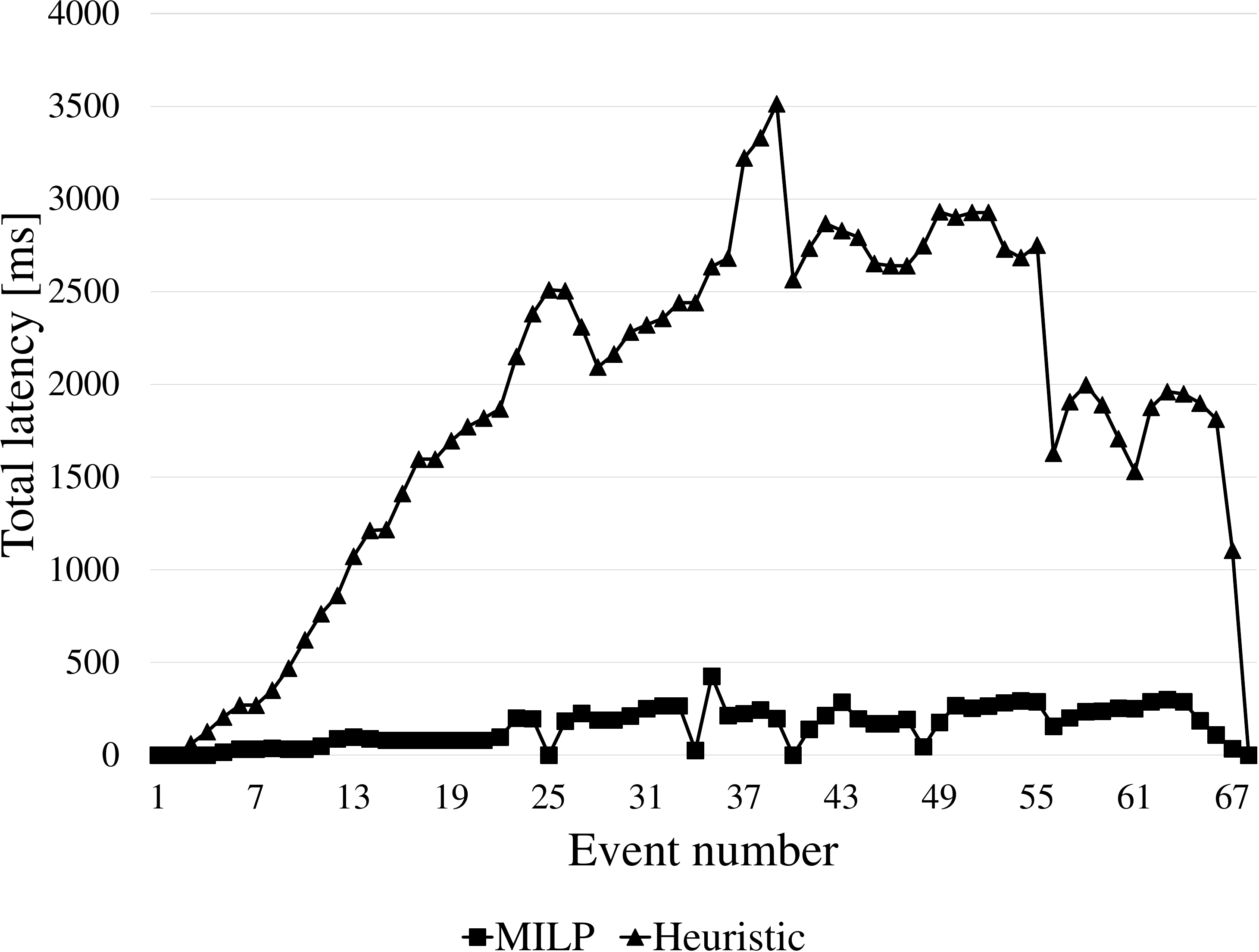}
\caption{Total latency over all created paths for the embedded template}
\label{fig:latency}
\end{figure}

We consider a substrate network with 20\,nodes and 44\,arcs, in which multiple services are deployed. Each service is a virtual content delivery network for video streaming, consisting of a streaming server, a DPI, a video optimizer, and a cache. The number of concurrently active services varies from 0 to 4, the number of sources varies from 0 to 20. Fig.~\ref{fig:scenario} shows how the total data rate of the sources (as a metric of the demand) and the total CPU size of the created instances (as a metric of the allocated processing capacity) change through re-optimization after each event. An event is the emergence or disappearance of a service, the emergence or disappearance of a source, or the change of the data rate of a source. As can be seen, the allocated capacity using both the heuristic and the MILP algorithms follow the demand very closely, meaning that our algorithms are successful in scaling the service in both directions to quickly react to changes in the demand. 

Regarding total data rate and total latency of the overlay edges, the
MILP algorithm performs better than the heuristic algorithm. For
example, Fig.~\ref{fig:latency} shows the total latency over all paths
created for the template in this scenario\footnote{In
  Fig.~\ref{fig:latency}, in the high-load area between event 20 and
  50, some problem instances are too complex to be solved within the
  60 seconds time limit we have set for the optimizer. This results in
  solutions with zero latency, as no paths are created.}. The reason
for this difference is that in the MILP algorithm, the optimal
location for all required instances can be determined at the same time. This results in
shorter distances between the source and the instances. The heuristic
algorithm, however, needs to create instances one by one, resulting in
larger data rates traveling over larger distances in the substrate network. 

In this scenario, to handle the peak demand, a total of 127 instances are created using the MILP algorithms, while the heuristic algorithm creates 261 instances. 

\subsection{Scalability}

\begin{figure}[tb]
\centering
\subfigure[\label{fig:exec_time_milp}Execution time of the MILP algorithm]{\includegraphics[width=0.9\columnwidth]{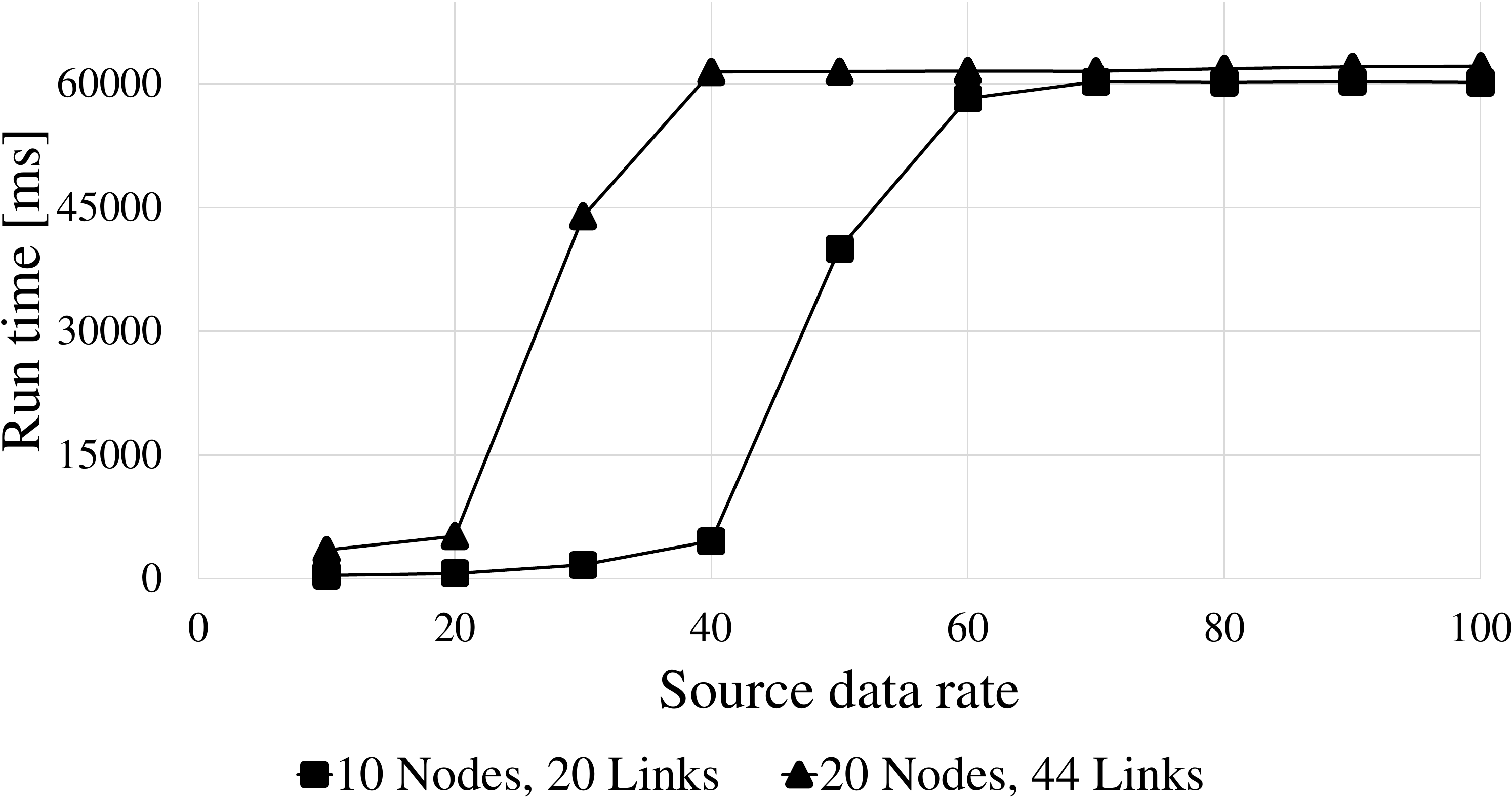}}
\subfigure[\label{fig:gap_milp}Optimality gap of the MILP algorithm]{\includegraphics[width=0.9\columnwidth]{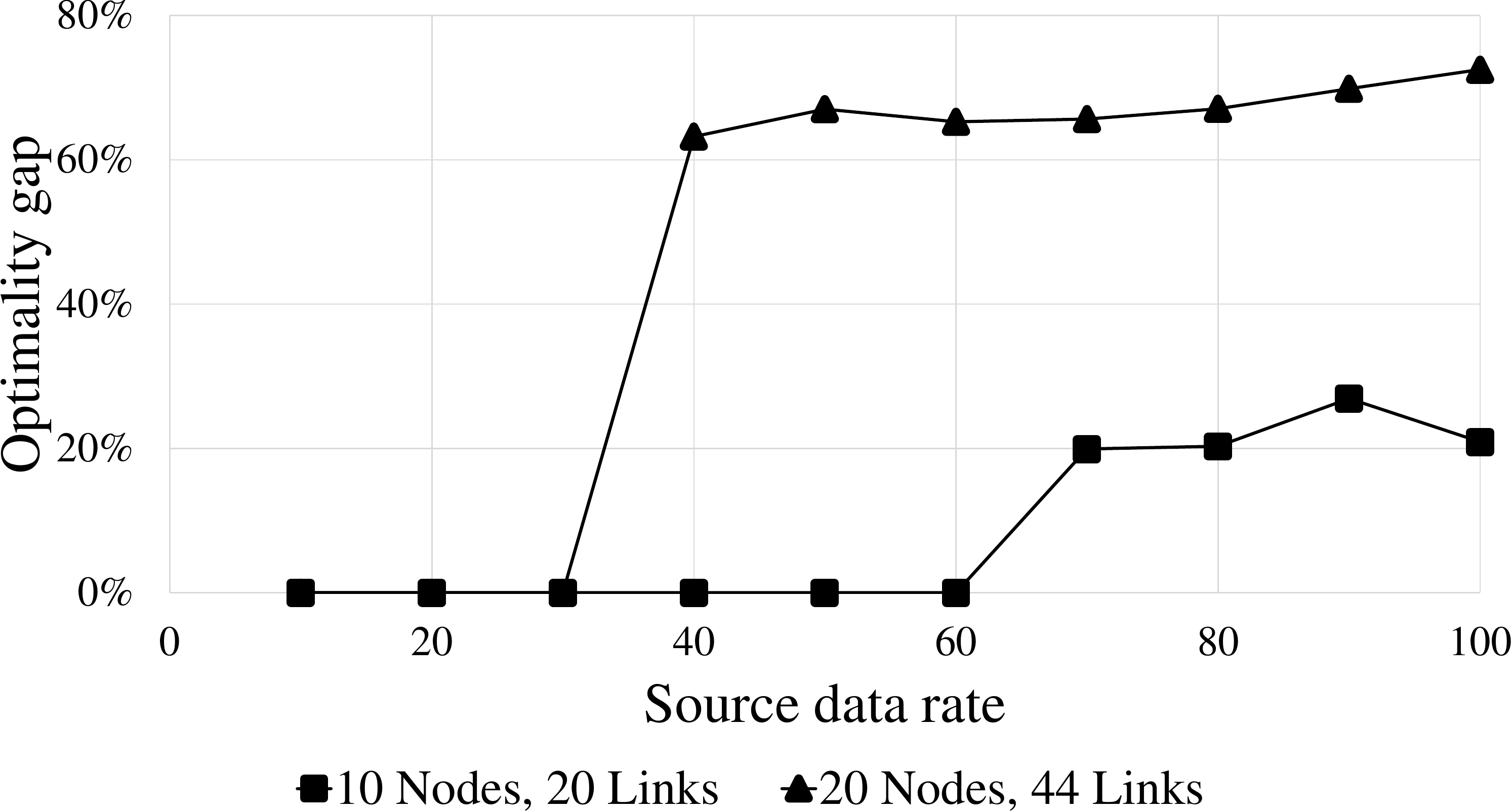}}
\subfigure[\label{fig:exec_time_heur}Execution time of the heuristic algorithm]{\includegraphics[width=0.9\columnwidth]{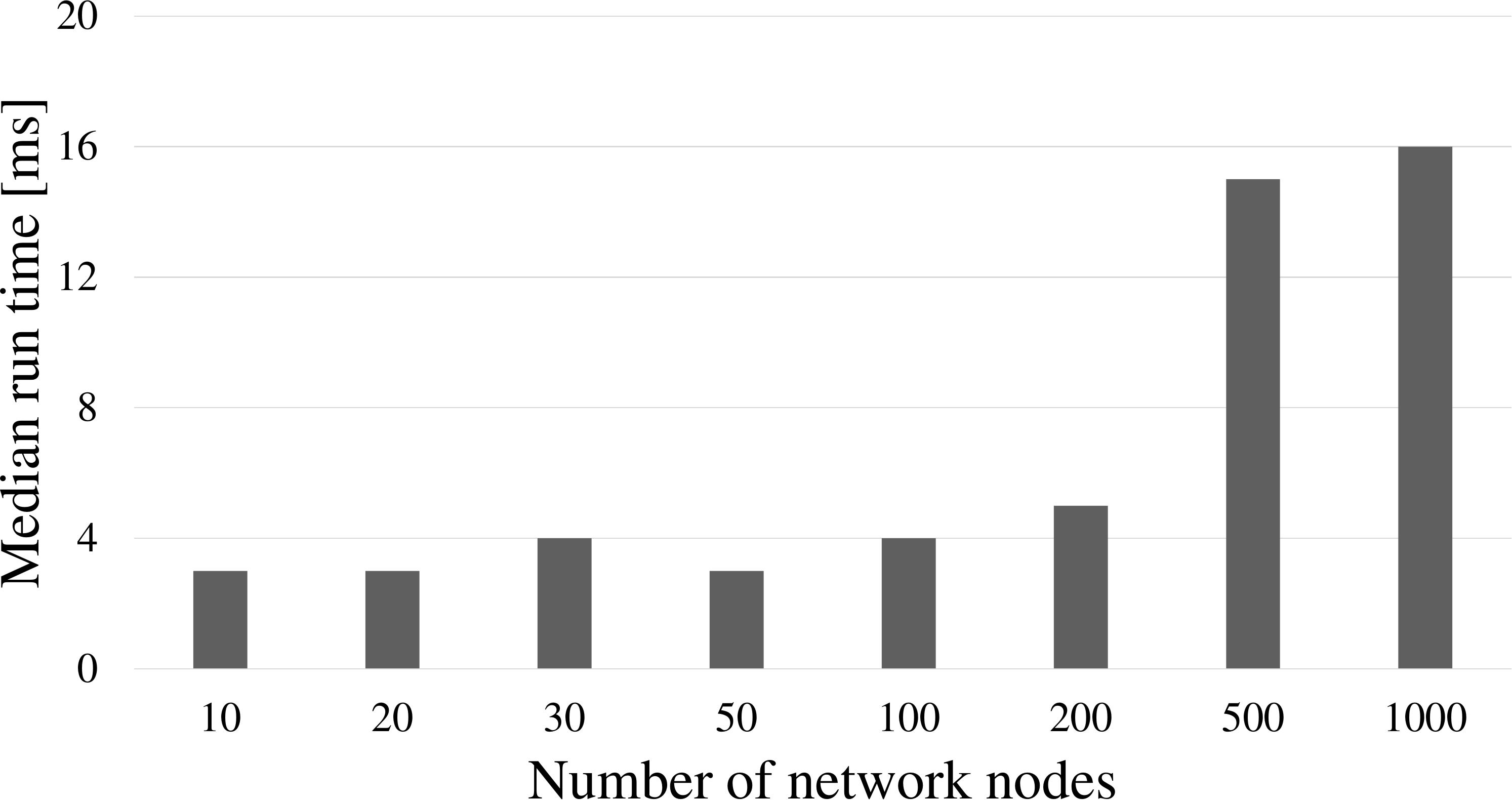}}
\caption{Scalability of the presented algorithms}
\end{figure}

Since the template embedding problem is NP-hard, it is foreseeable that the scalability of the MILP solver will be limited. In order to test this, we gradually increase the source data rate of the service from our first experiment, leading to an increasing number of instances; moreover, we also consider substrate networks of increasing size. In each case, the MILP solver is run with a time limit of 60 seconds, meaning that the solution process stops at (roughly) 60 seconds with the best solution and the best lower bound that the solver found until that time. The measurements were performed on a machine with Intel Core i5-4210U CPU @ 1.70GHz and 8GB RAM.

Fig.~\ref{fig:exec_time_milp} shows the execution time of the MILP
algorithm for different data rates and substrate network sizes, while
Fig.~\ref{fig:gap_milp} shows the corresponding gap between the found
solution and the lower bound. As can be seen, for a small network with
10 nodes and 20 arcs, the algorithm computes optimal results for the
lower half of source data rate values, and even for larger source data
rates, the optimality gap is quite low (around 20\,\%), meaning that
the results are almost optimal. However, for a bigger substrate
network with 20 nodes and 44 arcs, the solver reaches the time limit for much smaller source data rate and also the optimality gap is much bigger. For even bigger substrate networks, the performance of the algorithm further deteriorates, up to the point where it cannot be run anymore because of memory problems. The large sensitivity to the size of the substrate network is not surprising, given that the number of variables of the MILP is cubic in the size of the substrate network.

In contrast, as shown in Fig.~\ref{fig:exec_time_heur}, the execution
time of the heuristic algorithm remains very low even for the largest
substrate networks: for 1000 nodes and 2530 arcs, the execution time
is still below 20 milliseconds, rendering the heuristic practical for
real-world problem sizes as well.


\section{Conclusions}
\label{sec:concl}

We have presented JASPER, a fully automatic approach to scale, place,
and route multiple virtual network services on a common substrate
network. JASPER can be used for both the initial allocation of newly
requested services and the adaptation of existing services to changes
in the demand. Besides formally defining the problem and proving its
NP-hardness, we developed two algorithms for it, an MILP-based one and
a custom constructive heuristic.
Empiric tests have shown how our approach finds a balance between conflicting requirements and ensures that the allocated capacity quickly follows changes in the demand. The MILP-based algorithm gives optimal or near-optimal results for relatively small substrate network graphs, making it suitable for, e.g., calculations on top of a geographically distributed network where each node represents a data center. The heuristic remains very fast for even the largest networks that were tested. Overall, the tests gave evidence to the feasibility of our approach, which makes it possible (i) for service developers to specify services at a high level of abstraction and (ii) for providers to quickly re-optimize the system state after changes.

Promising future research directions include, beside further algorithmic enhancements to the presented algorithms and the development of new algorithms, the consideration of queuing incoming requests in the service components and the investigation of the effects of cyclic service templates.

\section*{Acknowledgment}

This work has been performed in the context of the SONATA project, 
funded by the European Commission under Grant number 671517 through 
the Horizon 2020 and 5G-PPP programs. This work is partially supported 
by the German Research Foundation (DFG) within the Collaborative Research 
Center ``On-The-Fly Computing'' (SFB 901).

The work of Z.\ \'A.\ Mann was partially supported by the Hungarian Scientific Research Fund (Grant Nr. OTKA 108947) and the European Union's
Horizon 2020 research and innovation programme under grant 731678 (RestAssured).


\bibliographystyle{IEEEtran}
\bibliography{ref}

\vfill


\end{document}